\newtheorem{theorem}{Theorem}
\newtheorem{corollary}{Corollary}
\begin{document}

% Use the \preprint command to place your local institutional report
% number in the upper righthand corner of the title page in preprint mode.
% Multiple \preprint commands are allowed.
% Use the 'preprintnumbers' class option to override journal defaults
% to display numbers if necessary
%\preprint{}

%Title of paper
\title{Passive quantum error correction of linear optics networks through error averaging}

% repeat the \author .. \affiliation  etc. as needed
% \email, \thanks, \homepage, \altaffiliation all apply to the current
% author. Explanatory text should go in the []'s, actual e-mail
% address or url should go in the {}'s for \email and \homepage.
% Please use the appropriate macro foreach each type of information

\newcommand{\affone}{Centre for Quantum Computation and Communication Technology (CQC2T), The School of Mathematics and Physics, The University of Queensland, Australia.}
\newcommand{\afftwo}{Centre for Quantum Computing \& Intelligent Systems (QCIS), University of Technology Sydney, Australia.}

% \affiliation command applies to all authors since the last
% \affiliation command. The \affiliation command should follow the
% other information
% \affiliation can be followed by \email, \homepage, \thanks as well.
\author{Ryan J. Marshman}
%\email[]{Your e-mail address}
%\homepage[]{Your web page}
%\thanks{}
%\altaffiliation{}
\affiliation{\affone}

\author{Austin P. Lund}
\affiliation{\affone}

\author{Peter P. Rohde}
\email[]{dr.rohde@gmail.com}
\homepage{http://www.peterrohde.org}
\affiliation{Centre for Quantum Software \& Information (QSI), Faculty of Engineering \& Information Technology, University of Technology Sydney, NSW 2007, Australia}
\affiliation{Hearne Institute for Theoretical Physics and Department of Physics \& Astronomy, Louisiana State University, Baton Rouge, LA 70803, United States}
%\affiliation{\afftwo}

\author{Timothy C. Ralph}
\affiliation{\affone}

%Collaboration name if desired (requires use of superscriptaddress
%option in \documentclass). \noaffiliation is required (may also be
%used with the \author command).
%\collaboration can be followed by \email, \homepage, \thanks as well.
%\collaboration{}
%\noaffiliation

\date{\today}

\begin{abstract}
We propose and investigate a method of error detection and noise correction for bosonic linear networks using a method of unitary averaging. The proposed error averaging does not rely on ancillary photons or control and feed-forward correction circuits, remaining entirely passive in its operation.  We construct a general mathematical framework for this technique then give a series of proof of principle examples including numerical analysis. Two methods for the construction of averaging are then compared to determine the most effective manner of implementation and probe the related error thresholds.  Finally we discuss some of the potential uses of this scheme. 
\end{abstract}

% insert suggested PACS numbers in braces on next line
\pacs{}
% insert suggested keywords - APS authors don't need to do this
%\keywords{}

%\maketitle must follow title, authors, abstract, \pacs, and \keywords
\maketitle

% body of paper here - Use proper section commands
% References should be done using the \cite, \ref, and \label commands
\section{Introduction \label{intro}}

The evolution of a multi-mode bosonic quantum state in a linear network can be simply described by a linear set of equations relating input and output bosonic modes.  These types of interactions are of interest as they are simple to arrange for most experiments involving electromagnetism but nevertheless are useful and have interesting quantum information applications.  

Linear networks are not universal for quantum information processing on their own.  However, they can be made universal using post-selection and feed forward methods with a polynomial overhead in the number of photons \cite{KLM,one-way_quantum_computer,OQC}. More recently they have been shown to deterministically generate quantum statistics that cannot be efficiently computed using classical computing resources alone (i.e.  the BosonSampling problem) \cite{Boson}. They also form the basis for optical quantum walks, for which numerous applications have been described, and been subject to widespread experimental demonstration \cite{bib:aharonov1993quantum,bib:Broome10,bib:PeruzzoQW,bib:RohdeQWExp12,bib:Schreiber10}.

Linear networks for quantum optics experiments have traditionally been implemented using bulk optical devices \cite{OQC}. 
However efforts to build integrated optical circuits have meant that the size of the networks has the potential to be made orders of magnitude smaller and consequently there is a great potential for their complexity to increase \cite{thompson2011integrated}.

In theoretical proposals for optical quantum information tasks using linear networks, it is often assumed that it is possible to configure an arbitrary linear network rapidly and with high precision.  This paper considers the second of these requirements by studying the effects of imprecision in configuring linear networks.

The model we consider assumes that large linear networks can be configured arbitrarily but with some additional noise.  This may be due to experimental imprecision of defining linear network parameters which shot-by-shot results in fluctuations of the parameters around their mean values.  We wish to concentrate on the effects due to linear network errors so we assume ideal generation of Fock basis states and the ability to make ideal Fock basis detections. Furthermore, we also assume the networks have no loss at any stage be that in the injection of states, the out-coupling to detection devices or in the network itself.  

We show that by redundantly encoding the network matrix describing a desired linear network, it is possible to generate an effect which tends towards the target network matrix when averaging over the redundant encoding.   The averaging effect occurs in a non-deterministic manner and hence the transformation acts as a filter where noise is directed into outputs which are then post-selected away (see Fig.~\ref{fig:output_probabilities}).
\begin{figure}
	\begin{centering}
		\includegraphics[width=\columnwidth]{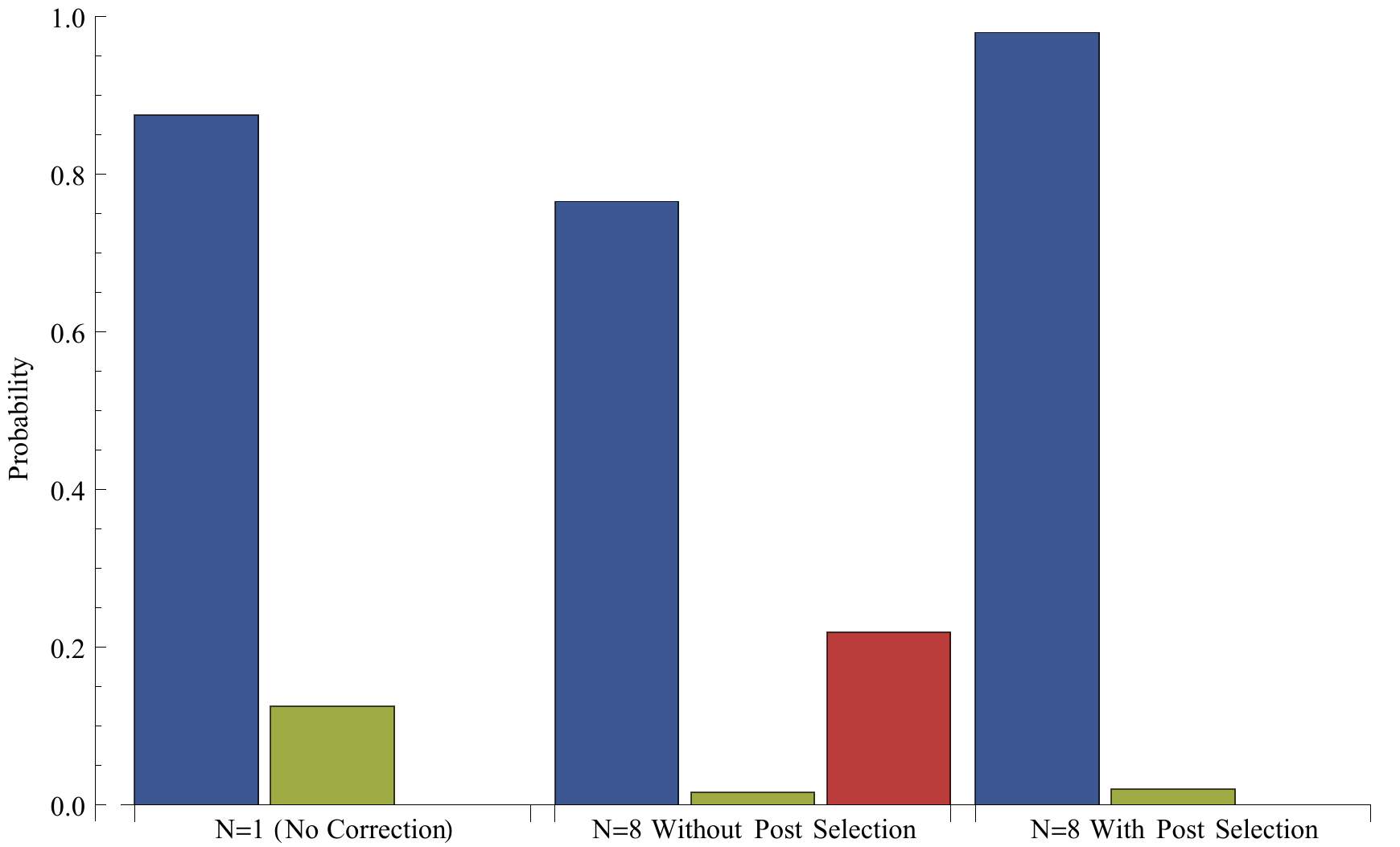}
	\end{centering}
	\caption[Comparison of output probability distributions with and without Error Averaging.]{Comparison of output probability distributions with and without error averaging. N corresponds the the number of redundant copies of the unitary being employed. Here the effect post selection has on the output distribution can be seen. The blue bars represent the probability of observing the photon in the correct output mode, green corresponds to observing the photon in the incorrect output mode and red corresponds to observing the photon in any of the error detection modes. The probabilities are based on a single photon in a Mach-Zehnder interferometer with an individual phase shifters variance $v=0.5\ \textrm{rad}^{2}$.} 
	\label{fig:output_probabilities}
\end{figure}
The central limit theorem applies to the individual matrix elements of the averaged transformation and hence their variance decreases as $\frac{1}{N}$.  The form of the average matrix and the distribution of the transformations on a finite number of averages depends on the details of the noise applied to the network encoding.  The results presented in this paper analyse these details showing conditions in which this technique may be of utility. 

The next section introduces the averaging scheme and some mathematical details that apply in the most general case. Section~\ref{implementation} includes numerous proof of principle examples which serve to highlight the effects of Error Averaging with a focus on the behaviour of the probability of success. Section~\ref{averaging at end vs step} studies two different ways of redundantly encoding a single mode phase shift and the effects of the different encodings on the resultant error and probability of success.  We then numerically analyse the averaging method for a four-mode operation in section~\ref{Four Mode Impementation Comparison}.  We will discuss some of the consequence of these results as well as future directions in Section~\ref{Discussion} and draw comparisons between Error Averaging and standard error correction in Section~\ref{Comparison with conventional error correction} before making some concluding remarks. 

% NOTATION SUGGESTION: Use \hat{} for Fock space operators and no hat for normal matricies.
\section{General Unitary Error Averaging\label{gen case}}

Here we are concerned with the case of bosonic linear scattering networks.  These are evolutions of a multi-mode bosonic field where the Heisenberg equations of motion for the annihilation operators of each mode can be written as a linear combination of all annihilation operators.  That is, if $\mathcal{U}_U$ is a unitary operation on a $m$ mode system, then
\begin{equation}
	\mathcal{U}_U a_i \mathcal{U}_U^\dagger = \sum_j U_{ij} a_j
\end{equation}
where, to preserve commutation relationships, $U$ must be a unitary matrix.  It is the network matrix $U$ that we will focus on.

Consider a linear network whose elements are those of a Discrete Fourier Transform (DFT).  That is, we have a Heisenberg style evolution between mode annihilation operators of the form
\begin{equation}
	a_{j,r} \rightarrow \frac{1}{\sqrt{N}} \sum_{k=0}^{N-1} \omega^{rk} a_{j,k}	
\end{equation}
where $\omega = e^{-i2\pi /N}$ and zero-indexing has been used, that is, $k=0$ corresponds to the first mode. The first subscript for the annihilation operator denotes the input mode and the second describes a quantity of redundancy $N$ which we explain shortly. 

We then act the $N$ copies of a target unitary $U$.  By this we mean that there is some variation between the copies but the intention was to implement the unitary $U$.  This can be described by the transformation
\begin{equation}
	a_{j,r} \rightarrow \sum_{l=0}^{m-1} (U_r)_{lj} a_{l,r}.
\end{equation}
where $N$ noisy copies of $U$ are made, denoted here as $U_1, U_2, \ldots, U_N$, where we assume an independent error model across the redundancies. 

After this the DFT matrix is applied again.  This results in the overall transformation
\begin{equation}
	a_{j,r} \rightarrow \frac{1}{N} 
	\sum_{l=0}^{m-1} \sum_{k,k^\prime=0}^{N-1}
	(U_{k^\prime})_{lj} \omega^{(r+k)k^\prime} a_{l,k}.
\end{equation}
We consider the case where all redundant modes are initialised in the vacuum state and post-select on the cases where no photons are present in the output of the redundant modes.  This means that we only need consider the parts of this transformation expression where the second subscript of the annihilation operator is zero.  In this case we have
\begin{equation}
	\label{sum_transformation}
	a_{j,0} \rightarrow \frac{1}{N}\sum_{l=0}^{m-1} \sum_{k^\prime=0}^{N-1}
	(U_{k^\prime})_{lj} a_{l,0} = \sum_{l=0}^{m-1} (M_N)_{lj} a_{l,0}
\end{equation}
where $M_N$ is a matrix defined by
\begin{equation}
	M_N = \frac{1}{N} \sum_k U_k.
\end{equation}
This matrix is then the effective linear network matrix for the post-selected system.  It includes information about the probability of success and so in general it will be not unitary.  The remainder of this paper is directed towards analysing the scenarios that arise from the multitude of choices for $U_k$ that form the expression.

For the main theorem of our work we consider a general linear network described by a unitary network matrix $U$ with any dimensionality. 

%Theorem~\ref{Theorem 1} below states that given access to $N$ linear networks $\{U_1,U_2,\ldots,U_N\}$ which are randomly distributed such that for all $i \in {1,\ldots,N}$, and given $U_{r,s}=\alpha_{r,s}e^{i\theta_{r,s}}$, $\langle \left(\theta_{i}\right)_{r,s} \rangle = \theta_{r,s}$, $\left|\left(\alpha_{i}\right)_{r,s}-\alpha_{r,s}\right|\ll1$ and $Var(U_i) = \sigma_i^2 < \infty$  the mean values of these unitaries approaches $c\times\hat{U}$, where $0\le c\le1$, in the same sense as the central limit theorem. By the central limit theorem we also see that the variance, $\sigma_{i}^{2}$, scales as $\frac{1}{N}$.

\begin{theorem}
\label{Theorem 1}
Given $N$ linear networks described by unitary matrices $\{U_1,U_2,\ldots,U_N\}$ that are random with independent and identically distributed statistics such that for all $i~\in~{1,\ldots,N}$, $\langle U_i \rangle = M$.  Then the random variable 
\begin{equation}
	\label{sum_unitary}
	M_{N}=\frac{1}{N}\sum_{i=1}^{N}U_{i}
\end{equation}
is a matrix with mean value $M$ and whose matrix elements have variance scaling as $O(1/N)$.
\end{theorem}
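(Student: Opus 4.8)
The plan is to recognize Theorem~\ref{Theorem 1} as nothing more than an entrywise application of the elementary facts underlying the law of large numbers, with the unitarity hypothesis supplying the single extra ingredient that is needed, namely a uniform bound on the individual matrix elements. Concretely, I would work element by element: fix indices $l,j$ and study the scalar random variable $(M_N)_{lj} = \frac{1}{N}\sum_{i=1}^{N}(U_i)_{lj}$, which is an average of $N$ independent and identically distributed complex random variables. Nothing about the joint distribution of different entries of a single $U_i$ enters this particular claim, so the correlations induced by unitarity within one copy are irrelevant here; only independence \emph{across} the $N$ redundant copies is used.

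First I would establish the mean. By linearity of the expectation applied to each of the $m^2$ matrix elements,
\begin{equation}
  \langle M_N \rangle = \frac{1}{N}\sum_{i=1}^{N} \langle U_i \rangle = \frac{1}{N}\sum_{i=1}^{N} M = M .
\end{equation}
Next, for the variance of a fixed entry I would invoke independence of the copies: since $U_1,\dots,U_N$ are independent, the scalars $\{(U_i)_{lj}\}_{i=1}^{N}$ are independent and identically distributed, so the variance of their sum is the sum of their variances, giving
\begin{equation}
  \mathrm{Var}\big((M_N)_{lj}\big) = \frac{1}{N^2}\sum_{i=1}^{N}\mathrm{Var}\big((U_i)_{lj}\big) = \frac{1}{N}\,\mathrm{Var}\big((U_1)_{lj}\big) .
\end{equation}
It then remains to bound $\mathrm{Var}((U_1)_{lj})$, and this is exactly where unitarity is used: each column of $U_1$ has unit norm, so $|(U_1)_{lj}|\le 1$ with certainty, whence $\mathrm{Var}((U_1)_{lj}) = \langle |(U_1)_{lj}|^2\rangle - |M_{lj}|^2 \le 1$. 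Therefore $\mathrm{Var}((M_N)_{lj}) \le 1/N = O(1/N)$ for every $l,j$, with an implied constant that is uniform over all $m^2$ entries and independent of $N$.

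I do not anticipate a genuine obstacle; the only point requiring any care is the finiteness (indeed boundedness) of the per-copy entrywise variance, which the unitarity assumption delivers for free and which also makes the $O(1/N)$ bound dimension-insensitive. Optionally I would remark that the same independence argument, together with the Lyapunov/Lindeberg condition (trivially satisfied since the $(U_i)_{lj}$ are bounded), upgrades the conclusion to an entrywise central limit theorem: $\sqrt{N}\big((M_N)_{lj}-M_{lj}\big)$ converges in distribution to a complex Gaussian, which is the statement used informally in the Introduction when asserting that the fluctuations of $M_N$ about $M$ shrink like $1/\sqrt{N}$.
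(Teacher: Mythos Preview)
Your proposal is correct and follows essentially the same route as the paper: work entrywise, use unitarity to bound $|(U_i)_{lj}|\le 1$ so that the per-copy variance is finite, and conclude the $1/N$ scaling from the average of i.i.d.\ terms. The only cosmetic difference is that the paper phrases the conclusion as an invocation of the central limit theorem, whereas you obtain the variance bound directly from additivity of variance under independence and mention the CLT only as an optional refinement; your version is arguably the cleaner statement of the same argument.
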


\begin{proof}\label{Proof 1}
Our aim in the proof is to use the central limit theorem.  Consider matrix element $r,s$ of $M_N$.  This is a random variable
\begin{equation}
	\left(M_N\right)_{rs} = \frac{1}{N} \sum_{i=1}^N \left(U_i\right)_{rs}.
\end{equation}
As the matrix elements $(U_i)_{rs}$ are constructed from unitary matrices, their magnitude is bounded by $1$.  Given this finite domain, the real and imaginary parts have maximum variance and covariance of $1$ (though these extremal values are not simultaneously achievable).  Given this bounded variance, we can use the central limit theorem to conclude that the matrix element $\left(M_N\right)_{rs}$ is a random variable with mean value $M_{rs}$.
The variance of the real or imaginary part of $(M_N)_{rs}$ is then upper bounded by $1/N$ as per the central limit theorem.
\end{proof}

The question now is what forms the mean average matrix $M$ as defined in Theorem \ref{Theorem 1}, can take. First we consider the trivial case where the unitary matrices are $1 \times 1$ dimensional.

\begin{corollary}
\label{Corollary 1}
	If each $\{U_1,\ldots,U_N\}$ are $1 \times 1$-dimensional, then $M$ is a complex number with magnitude $|M| \leq 1$.
\end{corollary}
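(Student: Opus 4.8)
The plan is to observe that a $1\times1$ unitary matrix is nothing more than a complex number of unit modulus, so each $U_i$ can be written as $U_i = e^{i\theta_i}$ with $\theta_i$ a real random variable distributed according to the common law of the ensemble. The quantity $M = \langle U_i\rangle$ is then the expectation of such a random variable, which is automatically a complex number, and the only real content of the corollary is the bound $|M|\le 1$.

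First I would express $M = \langle e^{i\theta_i}\rangle = \int e^{i\theta}\,d\mu(\theta)$, where $\mu$ is the underlying probability measure on $\mathbb{R}$ (or on the circle). Then the triangle inequality for expectations — equivalently Jensen's inequality applied to the convex function $z\mapsto|z|$ — gives
\begin{equation}
	|M| = \bigl|\langle e^{i\theta_i}\rangle\bigr| \le \bigl\langle |e^{i\theta_i}|\bigr\rangle = \langle 1\rangle = 1,
\end{equation}
which is exactly the claim. It is also worth remarking on the equality case: since $\mu$ is a non-negative measure, $|M| = 1$ forces the distribution to be a point mass at a single $e^{i\theta_0}$, i.e.\ the noiseless scenario, which identifies the unique situation in which the averaged network $M$ remains unitary in this $1\times1$ setting.

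To show the bound is the best possible and to say which $M$ actually occur, I would exhibit explicit distributions: a point mass at $e^{i\theta_0}$ attains $|M|=1$; placing weight $p$ on $+1$ and $1-p$ on $-1$ yields $M = 2p-1$, which sweeps the real interval $[-1,1]$; and rotating and combining three or more such atoms fills the entire closed unit disk, so every complex number with $|M|\le 1$ is realised by some i.i.d.\ ensemble of $1\times1$ unitaries.

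I do not anticipate any genuine obstacle here — the statement is essentially the geometric fact that an average of points on the unit circle lies in the closed unit disk, and the proof is a one-line application of the triangle inequality once the $1\times1$ unitaries are recognised as phases. The only point requiring mild care is the equality analysis mentioned above, and the (optional) verification that the whole disk is attained, neither of which is difficult.
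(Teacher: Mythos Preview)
Your proof is correct and follows essentially the same approach as the paper: both write each $U_i = e^{i\theta_i}$ and bound the magnitude of the expectation, with the paper phrasing the result as ``the characteristic function of $p(\theta)$ evaluated at $1$ has magnitude at most $1$'' while you invoke the triangle inequality directly --- these are the same underlying fact. Your additional remarks on the equality case and on the attainability of the full closed disk go beyond what the paper establishes but are correct and natural complements.
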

\begin{proof}
	Write $U_k = e^{i \theta_k}$ where $p(\theta)$ is the probability density function for each of the angles $\theta_k$. From Theorem~\ref{Theorem 1} we need to compute the mean value 
	\begin{equation}
		M = \int^\pi_{-\pi} e^{i\theta} p(\theta) \mathrm{d}\theta. \label{eq:single parameter, single mode}
	\end{equation}
This is exactly the characteristic function of $p(\theta)$ evaluated at $1$.  The characteristic function is complex valued and has bounded magnitude of $1$, which is the desired result.
\end{proof}

By corollary~\ref{Corollary 1} it can be concluded that for the $1\times1$-dimensional case we can write $M=cU$ where $0 \leq c \leq 1$ and $U=e^{i\theta}$ has magnitude 1.  

% NOTE TO SELF(APL): Talk more about post-selection and M in this section
%Here we can identify the coefficient $c$ as the probability of success and as such, after post selection, the matrix $c\times U_{N}$ can be renormalised to a unitary matrix $U_{N}$ with $\lim_{N\rightarrow\infty}U_{N}=U$. 

Next consider higher dimensional matrices whose distribution is generated by a single parameter.  In this case, for any hermitian matrix $T$, which can be thought of as an infinitesimal generator from the $u(n)$ Lie algebra, we have
\begin{equation}
	M=\int e^{i\theta T}p(\theta)d\theta 
	\label{eq:single parameter, multi-mode}.
\end{equation}
We can make a change of variables in $\theta$ so that the distribution is changed to one that has mean zero
\begin{align}
	M&=\int e^{i(\mu + \theta^\prime)T} p(\mu + \theta^\prime) d\theta^\prime\\
	&= e^{i\mu T} \int e^{i\theta^\prime T} \bar{p}(\theta^\prime) d\theta^\prime \label{eq:seperating errors and transformations}
\end{align}
where $\bar{p}(\theta) = p(\mu + \theta)$ so that it has mean value zero.  By expanding the matrix exponential this expression can be written as
\begin{equation}
	M=\sum_n \frac{(iT)^n}{n!} \int \theta^n p(\theta) d\theta,
\end{equation}
which now relates to the moments of the underlying distribution in $\theta$. Assuming $p(\theta)$ were a Gaussian distribution with mean zero and variance $\sigma^2$ then we can write
\begin{equation}
	M = \sum_{n \in even} \frac{(iT)^n}{n!} (n-1)!! \sigma^n
\end{equation}
where $n!! = n(n-2)(n-4)\dots$ is the double factorial.  This series can be written back in the form of a matrix exponential, and by reintroducing the mean value we have
\begin{equation}
	M = e^{i\mu T} e^{-\frac{\sigma^2}{2} T^2}
\end{equation}
If $T^2=I$, which would be the case when choosing a Pauli matrix for $T$, then this expression would simplify to
\begin{equation}
M=Ue^{-\frac{\sigma^2}{2}}  \label{eq:Gaussian Psuccess}
\end{equation}
where $U$ is the unitary generated by the average parameter for $p(\theta)$.  The decaying exponential for the magnitude depends only on the variation in the distribution of $\theta$.

In the full parameter case, provided the target unitary $U$ again commutes with all errors a similar result can be found as discussed in corollary \ref{Corollary 2}.

\begin{corollary}
\label{Corollary 2}
If $\{U_1,\ldots,U_N\}$ are random $n$-dimensional unitaries such that $U_k = U exp\{i \sum_l \alpha_{kl} T_l\}$ with $n^2$ generators $T_l$ that are all hermitian and satisfy $T_l^2=I$, the parameters $\alpha_{kl}$ distributed independently with PDF $p_{l}(\alpha_l)$ which are all Gaussian with mean zero and small (but possibly different) variances so that all $U_k$ approximately commute with each other, then $M = c U$ where $0 < c < 1$ and $U$ is a unitary matrix.
\end{corollary}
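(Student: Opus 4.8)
\emph{Proof plan.} The plan is to compute $M=\langle U_k\rangle$ directly, exploiting the near-commutativity hypothesis together with the algebraic identity $T_l^2=I$, mirroring the single-parameter Gaussian calculation that precedes Eq.~\eqref{eq:Gaussian Psuccess}. Writing $U_k = U\exp\{i\sum_l\alpha_{kl}T_l\}$, the first step is to factor the multi-parameter exponential as $\exp\{i\sum_l\alpha_{kl}T_l\}\approx\prod_l\exp\{i\alpha_{kl}T_l\}$. This is where the assumption that the $U_k$ approximately commute (equivalently, that the variances are small enough that the $\alpha_{kl}$ are typically small) is used: the Baker--Campbell--Hausdorff corrections to this factorization are generated by commutators $[T_l,T_{l'}]$ weighted by products $\alpha_{kl}\alpha_{kl'}$, which are second order in the small parameters and hence negligible at the order we work.

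Next I would use the independence of the $\alpha_{kl}$ across the index $l$ to factor the expectation, $\langle\prod_l\exp\{i\alpha_{kl}T_l\}\rangle=\prod_l\langle\exp\{i\alpha_{kl}T_l\}\rangle$, and then evaluate each factor separately. Because $T_l^2=I$, we have $\exp\{i\alpha T_l\}=\cos(\alpha)I+i\sin(\alpha)T_l$; averaging over a Gaussian with mean zero and variance $\sigma_l^2$ gives $\langle\cos\alpha\rangle=e^{-\sigma_l^2/2}$ and $\langle\sin\alpha\rangle=0$, so each factor collapses to $e^{-\sigma_l^2/2}I$. Multiplying the factors and restoring the leading $U$ then yields $M=\bigl(\prod_l e^{-\sigma_l^2/2}\bigr)U = cU$ with $c=\exp\{-\tfrac12\sum_l\sigma_l^2\}$.

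Finally, since the sum $\sum_l\sigma_l^2$ is finite and each $\sigma_l^2\ge 0$ with at least one strictly positive, one concludes $0<c<1$, while $U$ is unitary by hypothesis; this is the claimed form of $M$. The main obstacle --- and really the only delicate point --- is making the first step honest: one must check that the BCH remainder still contributes only at higher order in the $\sigma_l$ \emph{after} averaging, so that $M=cU$ holds in precisely the approximate sense intended by the phrase ``all $U_k$ approximately commute''. A cleaner alternative is to adopt exact commutativity $[T_l,T_{l'}]=0$ as the idealization; then the factorization is exact, every step above becomes an equality, and the result follows with no estimate required.
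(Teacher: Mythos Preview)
Your proposal is correct and follows essentially the same approach as the paper: factor the multi-parameter exponential via the approximate-commutativity hypothesis, use independence to split the expectation into a product of single-parameter Gaussian averages, and then invoke $T_l^2=I$ to reduce each factor to a scalar $e^{-\sigma_l^2/2}$. Your treatment is in fact slightly more explicit than the paper's---you spell out the $\cos\alpha\,I+i\sin\alpha\,T_l$ decomposition and the BCH remainder, whereas the paper simply cites the earlier single-parameter result---but the logical skeleton is identical.
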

\begin{proof}
We will extend the proof of Corollary~\ref{Corollary 1} to the $n$-dimensional case.  From the independence of the distributed parameters, we can write a PDF for all parameters as $p(\alpha_1,\ldots,\alpha_{n^{2}}) = p_1(\alpha_1)\times\ldots\times p_{n^2}(\alpha_{n^{2}})$.  
%For any Write where $T_l$ are generators of the $u(n)$ Lie-algebra of which there are $n^2$ and $U$ is chosen so that the distribution is defined by probability density function $p(\alpha_1,\ldots,\alpha_{n^{2}})$ has mean zero. $T_l$ are chosen so that and $-\pi < \alpha_l < \pi$. 
%As all errors are taken to be independent we can also write  
The approximate mutual commutivity for this expansion means 
\begin{widetext}
\newcommand{\theint}{\int^\pi_{-\pi} \ldots \int^\pi_{-\pi} \int^\pi_{-\pi}}
\newcommand{\theintd}{\mathrm{d}\alpha_1 \mathrm{d}\alpha_2 \ldots \mathrm{d}\alpha_{n^{2}}}
\begin{equation}
	\int^\pi_{-\pi} \int^\pi_{-\pi}
	[\alpha_{kl}T_{l},\alpha_{km}T_{m}] p_l(\alpha_l)p_m(\alpha_m) \mathrm{d}\alpha_l \mathrm{d}\alpha_m \approx 0 \quad \forall l,m,
\end{equation}
or in other words, that the $\alpha_{kl}T_l$ are all small with high probability.  With this we can write $M$ as
\begin{align}
	M &= U \theint
	exp\{i \sum_l \alpha_{l} T_l\} p_1(\alpha_1)\ldots p_{n^2}(\alpha_{n^{2}})
	\theintd
	\label{eq:general integral form} \\
	&\approx U \theint
	\prod_{l}exp\{i \alpha_{l} T_l\}p_1(\alpha_1)\ldots p_{n^2}(\alpha_{n^{2}})
	\theintd  \\
	&= U\int^\pi_{-\pi} exp\{i \alpha_{1} T_1\}p_1(\alpha_1) \mathrm{d}\alpha_1 	\int^\pi_{-\pi} exp\{i \alpha_{2} T_2\}p_2(\alpha_2) \mathrm{d}\alpha_2 \ldots \int^\pi_{-\pi}
	exp\{i \alpha_{n^2} T_{n^2}\}p_{n^2}(\alpha_{n^2}) \mathrm{d}\alpha_{n^2} \\
	&\approx U \prod_l e^{-\frac{\sigma_l^2 T_l^2}{2}},\label{eq:approx commuting, general case}
\end{align}
\end{widetext}
where $\sigma_l^2$ is the variance of $p_l$ and the final approximation is assuming the distribution is small so that the bounds of the integration do not matter.  Using the $T_l^2=I$ requirement on the generators the final product of exponentials can be identified with the value $c$, we have the desired result.
\end{proof}
The requirement of $T_L^2=I$ merely reflects a simplification where the generators are built from the Pauli matricies which are the constructions we will focus on in this paper.  If this is not the case then it is possible to identify the hermitian operator $\prod_l e^{-\frac{\sigma_l^2 T_l^2}{2}}$ as a state dependant decay in the amplitude of the operator.

%It should be noted that if the distribution of parameters is Gaussian but not indepenent, then there is gaurenteed to be a linear combination of parameters for which this distribution is indepdenant.  Hence by changing the generators in the apropraite way, one can always arrange this situation for Gaussian distributions of the parameters $\alpha_{kl}$.

%From Corollary \ref{Corollary 2} and given our previous result for a Gaussian error distribution we can conclude that
%\begin{equation}
%M = U\prod_{l}^{n^2}e^{-\frac{\sigma_{l}^{2}}{2}}\label{eq:approx commuting, general case}
%\end{equation}
%To be clear this results requires that the errors are small such that they approximately commute with one another and with the target matrix $U$. The distributions $p(\alpha_{l})$ are also, in general, not equivalent, even if each component within a system has an equivalent error distribution the beam splitter parameters will not necessarily corresponds directly to the coefficients of the generators in the Lie algebra. 

Finding expressions for the matrix $M$ outside of the situations just outlined is an open problem.  In the most general case, $M$ is not proportional to a unitary matrix.  Furthermore, is not guaranteed that $M$ will satisfy the conditions for a normal matrix and hence cannot be unitary diagonalised.  So it is unclear if in general this post-selected regime has any connection to unitary quantum evolution at all.  Nevertheless, we will begin to examine situations which approach this domain through decompositions into single parameter problems and using numerical computations.

\section{Implementation\label{implementation}}

This section demonstrates how Error Averaging can be implemented for various example optical systems. These examples also serve as a verification of the range of validity of the approximately commuting errors assumption. It can also be noted that Equation~\ref{sum_unitary} can become the appropriate transformation for duality quantum computing by allowing the $U_{i}$ to be arbitrary\cite{dualityQC}.

Constructions for the redundant encoding using the DFT implementation from the previous section are useful mathematically but may be inconvenient to implement in practice.  The transformation of Eq.~\ref{sum_transformation} can also be achieved using an array of beam-splitters as shown in Fig.~\ref{fig:gen system}.  This beam-splitter array has the desirable property of being generated by a recursive pattern.  As shown by the bounding rectangles in Fig.~\ref{fig:gen system}, the outer and inner layers share the same basic structure. 

\begin{figure}[tbh]
	\includegraphics[width=\columnwidth]{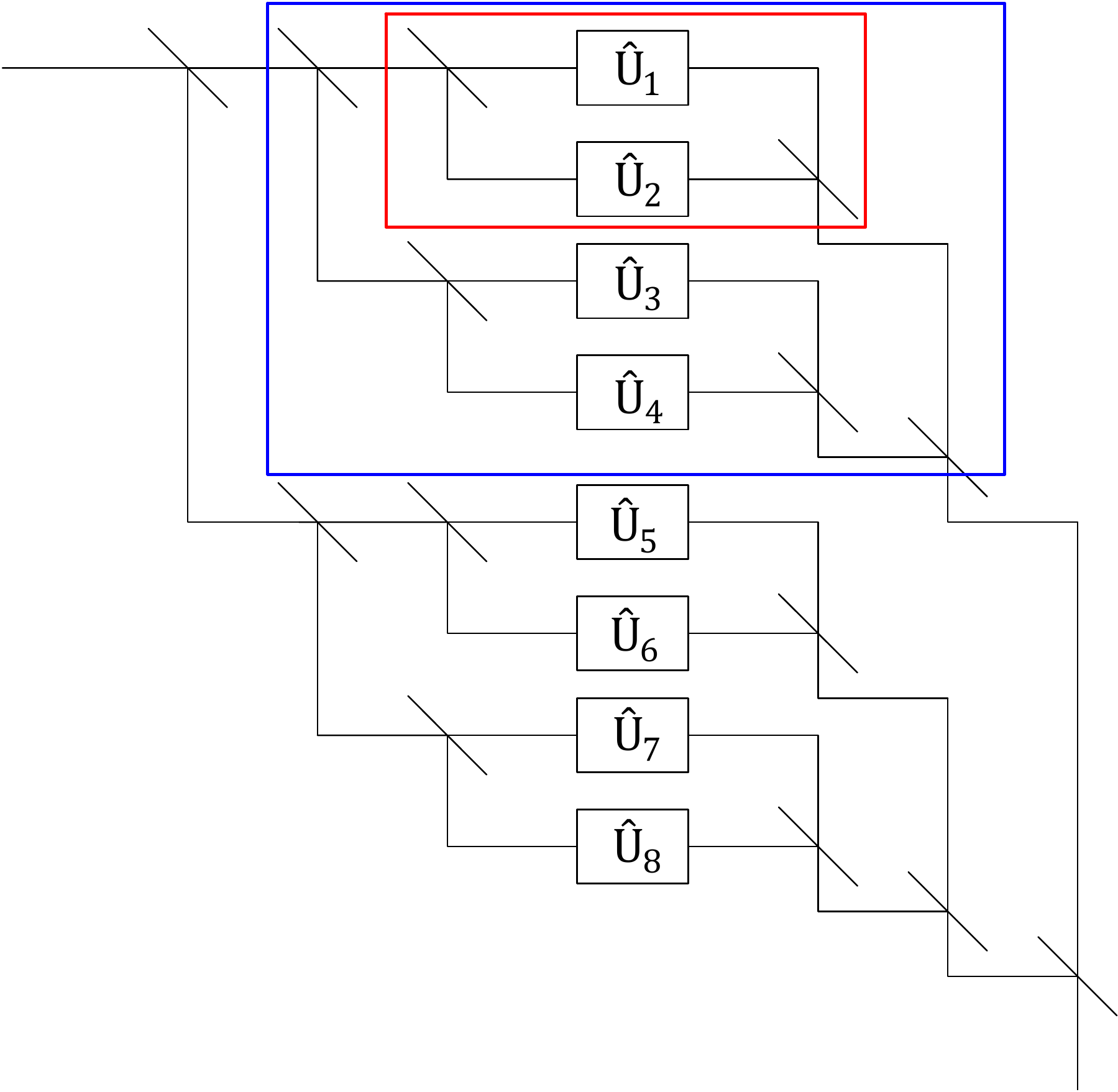}
	\caption{\label{fig:gen system}Redundant encoding using 50:50 beam-splitters for $N=8$. The boxes labelled $\hat{U}_i$ can be single or multi-mode.  In the multi-mode case the encoding beam-splitter network is repeated for each mode input and output. Output modes that are post-selected on the vacuum are not shown here. The red box shows an $N=2$ level encoding and the blue box shows $N=4$.  Further nesting of this arrangement can achieve any $N$ being a power of two.}
\end{figure}
	
%The following section discusses the results of numerical simulations for various systems implementing Error Averaging. To understand why linear optics was the chosen implementation consider the two key requirements of implementation. The first requirement is that the fundamental particle used as the computational qubit can be placed in a spacial superposition. The second requirement is that any employed unitary $\hat{U}$  does not act on the vacuum, that is $\hat{U}\left|0\right>=\left|0\right>$. Although these seem restrictive, LOQC automatically satisfies both of these requirements. A spacial superposition can be created easily using beam splitters and any linear optical unitary cannot, by definition, affect the vacuum.
	
%	Within the LOQC architecture, the following discussion will be agnostic about the specific encoding method by working in the Fock basis. In this way both single and dual rail encoding can be corrected and as such will generally not be considered in all further discussions.

%Initially we will focus on phase errors as this case is well understood and sets the stage for developing more complex scenarios.  The fundamental components in linear networks are phase shifters and beam splitters.  

All linear networks can be generated by arranging networks of beam-splitters and phase shifts~\cite{reck}.  Carolan et.~al.~\cite{ULO} have experimentally probed a linear network where all possible networks can be generated using controllable phase shifts and unvarying beam-splitters.  In their experimental implementation they demonstrated the ability to implement many quantum logic gates and linear optical protocols with a high fidelity.  Following this same methodology one can generate controllable beam-splitters using a Mach-Zehnder (MZ) interferometer consisting of a controllable phase shift in one arm and two fixed $50:50$ beam splitters.

Within this type of architecture the controllable phase shift is the key source of non-systematic noise.  Furthermore, redundantly encoding phase shifts are well characterised by the results presented above from Corollary~\ref{Corollary 1}.  So we will focus on phase shift induced errors for the analysis of this section and the next.  The model we will use assumes $50:50$ beam-splitters which are fixed and phase shifts that vary and are the source of all noise.

The noise in a controllable phase-shift can be written as $e^{i(\theta+\delta)}$ where $\theta$ is a real number representing the phase shift to be applied and $\delta$ is a zero-mean random variable representing the error.  For the identity operation $\theta=0$.  We will assume the distribution for $\delta$ to be Gaussian with variance $v$.  For values of $v$ that are comparable to $\pi^2$ the multi-valued nature of phase shifts becomes important.  But initially we will focus on the limit where $v \ll \pi^2$.

%In terms of the constructions used in Section~\ref{gen case}, this corresponds to Equation \ref{eq:single parameter, multi-mode} with $T=\begin{bmatrix}	0 & 1 \\	1 & 0 \\\end{bmatrix}$ and $M=e^{i\delta T}$ where we have set $U=\mathbb{I}$. 

%Need to come back to this.  Not sure what it means.

The remainder of this section considers the above implementation of a tunable beam-splitter as a MZ interferometer with the phase shift being error averaged. The error averaging will be performed using the concatenated beam-splitter network, hence $N=2^n, n \in \mathbb{N}$ and all beam-splitters used in this system will be fixed and with a splitting ratio of 50:50. We will analyse two key cases, the single photon and two photon performance.  The former involves the classical wave nature of the probability distribution for a single photon.  The latter includes Hong-Ou-Mandel~\cite{hom} style quantum interference.

\subsection{1 photon inputs \label{1 photon N arbitrary}}

The 1 photon network considered here is shown in Figure~\ref{fig:MZ_setup} both without any correction \ref{fig:uncorrected MZ beam splitter} and for the $N=2$ case \ref{fig:corrected MZ beam splitter}.  
\begin{figure}[tbh]
	\begin{subfigure}[b]{0.8\columnwidth}
		\includegraphics[width=1\columnwidth]{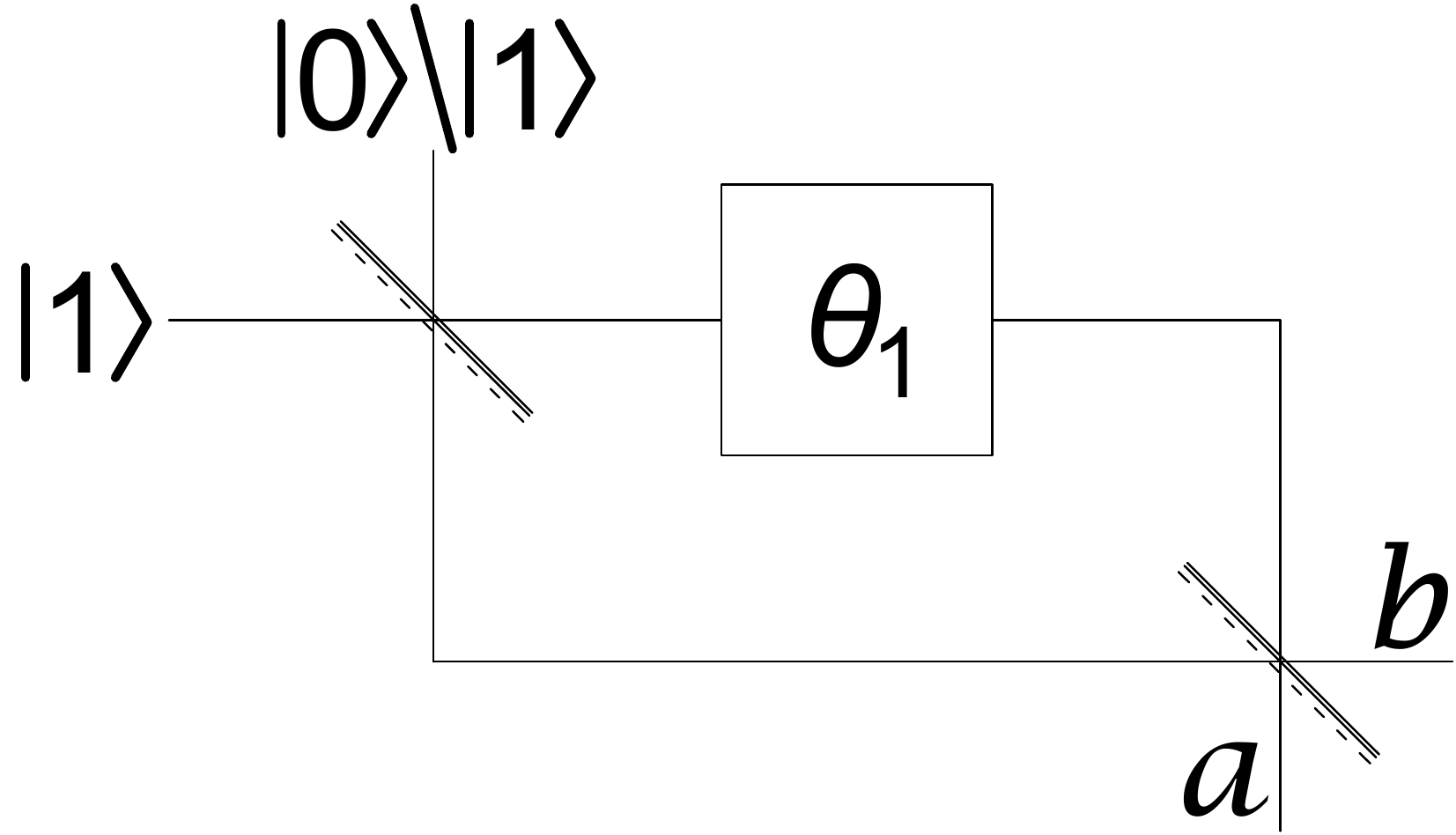}
		\caption{}
		\label{fig:uncorrected MZ beam splitter} 
	\end{subfigure}
	
	\begin{subfigure}[b]{0.9\columnwidth}
		\includegraphics[width=\columnwidth]{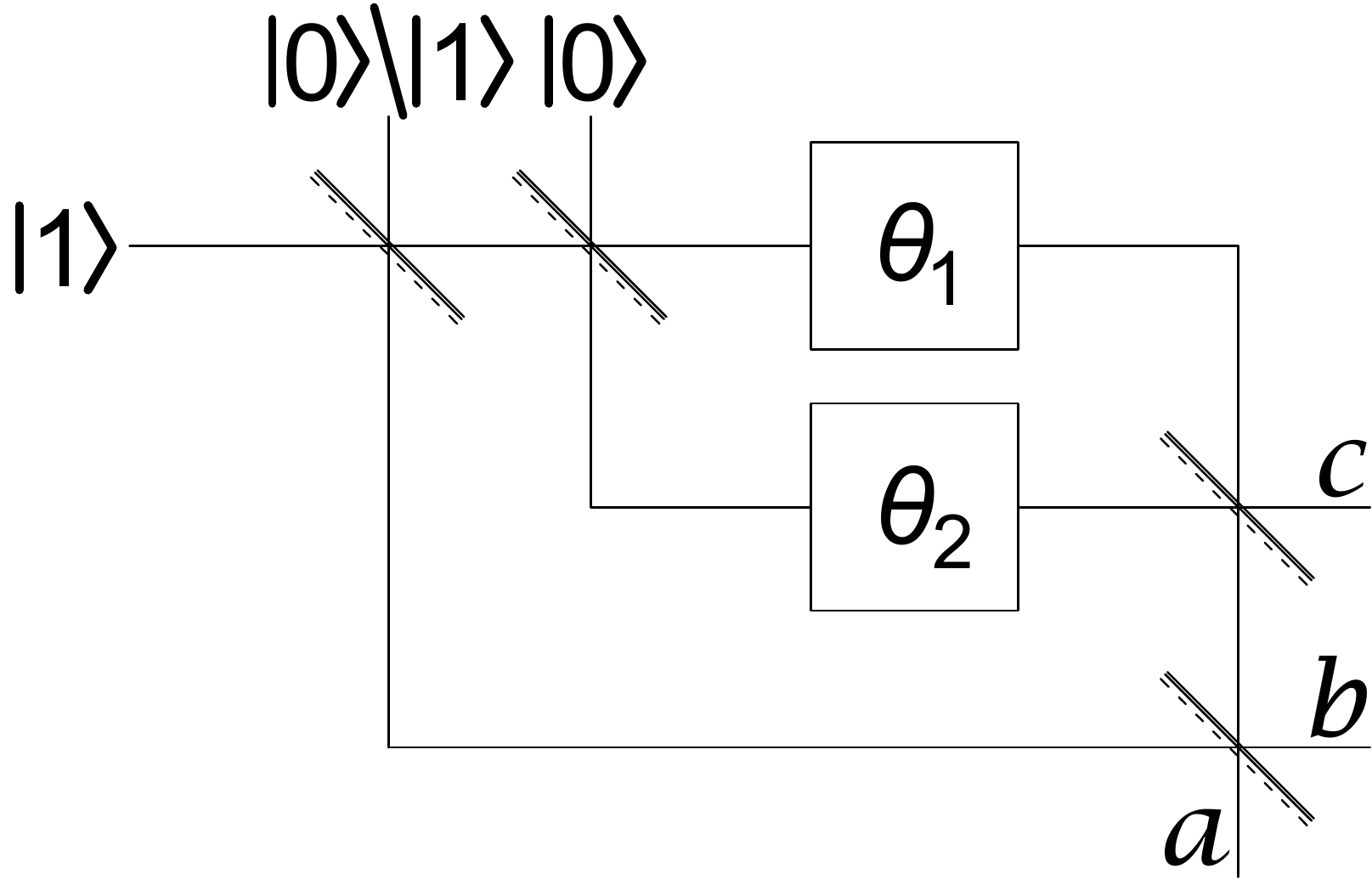}
		\caption{}
		\label{fig:corrected MZ beam splitter}
	\end{subfigure}
			\caption{\label{fig:MZ_setup}Diagram of MZ based tunable beam-splitter. (a) An uncorrected beam splitter implemented via MZ interferometer and (b) such a beam splitter corrected by redundantly encoding the phase shift, here for $N=2$. $a$ and $b$ label output modes and $c$ labels an error detection mode. The input state shown is used for both one and two photon calculations. The phase shift elements are marked with $\theta_{j}$ and are random variables.}
\end{figure}
	
The input single photon state is $\left|\phi\right\rangle = \hat{a}^{\dagger}\left|0\right\rangle $.  After traversing the error averaged network, the resulting un-normalised output state conditional on all encoded modes being vacuum is
\begin{multline}
	\ket{\psi} = \left(
	\left( \frac{e^{i\theta}}{2} \left\{\frac{1}{N}\sum_{j=1}^{N}e^{i\delta_{j}}\right\} + \frac{1}{2}\right) \hat{a}^{\dagger}  \right. \\
	+ \left. \left(\frac{e^{i\theta}}{2} \left\{\frac{1}{N}\sum_{j=1}^{N}e^{i\delta_{j}}\right\} - \frac{1}{2}\right) \hat{b}^{\dagger} 
	\right)
	\ket{0}.
	\label{eq:1ParbN}
\end{multline}
which is consistent with Theorem~\ref{Theorem 1}. Here $\theta_{j}=\theta+\delta_{j}$ with $\theta$ a constant and $\delta_j$ a random variable. 

As linear networks conserve photon number, and we have post-selected the cases where energy exits via the redundant encoding modes, we know that the output state always contains one and only one photon.  The probability that the photon is measured in a particular mode can therefore be equated to the average photon number in that mode.  Using this we can calculate from the un-normalised state $\ket{\psi}$ the probability of observing the photon in the $\hat{a}$ and $\hat{b}$ modes without post-selection to be 
\begin{eqnarray}
	\bra{\psi} \hat{a}^{\dagger}\hat{a} \ket{\psi} & \approx & \cos^{2}\left(\theta/2\right)+\frac{v}{4N}-\frac{v\cos^{2}(\theta/2)}{2}
\end{eqnarray}
and
\begin{eqnarray}
	\left\langle \psi\right|\hat{b}^{\dagger}\hat{b}\left|\psi\right\rangle & \approx & \sin^{2}\left(\theta/2\right)+\frac{v}{4N}-\frac{v\sin^{2}(\theta/2)}{2}
\end{eqnarray}
where we have taken a first-order expansion in the phase shift variance $v$.
The probability of success is the sum of the probabilities of the $\hat{a}$ mode and $\hat{b}$ mode.  This is 
\begin{eqnarray}
	P(\textrm{success}) & = & \bra{\psi}\hat{a}^{\dagger}\hat{a}\ket{\psi} +\bra{\psi}\hat{b}^{\dagger}\hat{b}\ket{\psi} \\
	& \approx & 1+\frac{v}{2N}-\frac{v}{2}. \label{eq:1pNarbitrary successs}
\end{eqnarray}
In the large $N$ limit, this corresponds to the linear approximation of Equation \ref{eq:Gaussian Psuccess} where we can identify $P(\textrm{success})=c$. 

Without any noise, choosing $\theta=0$ results in complete interference and the input single photon state will be transferred to a single output.  Any deviations from this are attributed to non-ideal interferometer performance. 
In this case the probability of observing the output in the correct mode without post-selection is
\begin{equation}
	\left\langle \psi\right|\hat{a}^{\dagger}\hat{a}\left|\psi\right\rangle \approx 1 - \frac{(2N-1)v}{4N}. \label{eq:1pNoPost}
\end{equation}
After post-selection this becomes
\begin{equation}
	\frac{\left\langle \psi\right|\hat{a}^{\dagger}\hat{a}\left|\psi\right\rangle}{P(\textrm{success})} \approx 1 - \frac{v}{4N}. \label{eq:1pWithPost}
\end{equation}
		
Figure~\ref{fig:post vs no post} shows how these two quantities scale with $N$. In particular, it can be seen that after post-selection the likelihood of the photon exiting the interferometer in the correct mode can be made arbitrarily close to unity by increasing $N$. Also, while the probability of success decreases for increasing $N$ it asymptotes to a constant value. This implies that as $N$ increases, even though the total quantity of errors added to the system increases, the effects of the combined errors on the interferometer is less. This result is also not dependent on the value chosen for $\theta$. Explicitly as the intended phase shift can be factored out in Eq. \ref{eq:1ParbN} similarly to the result shown in Eq. \ref{eq:Gaussian Psuccess} the effects of errors and our error correction can be considered separately from the transformation being applied.
%This can be understood as the process of redundantly splitting and recombining the state, driving noise into the post-selected ports by interference.  
\begin{figure}[tbh]
	\includegraphics[width=\columnwidth]{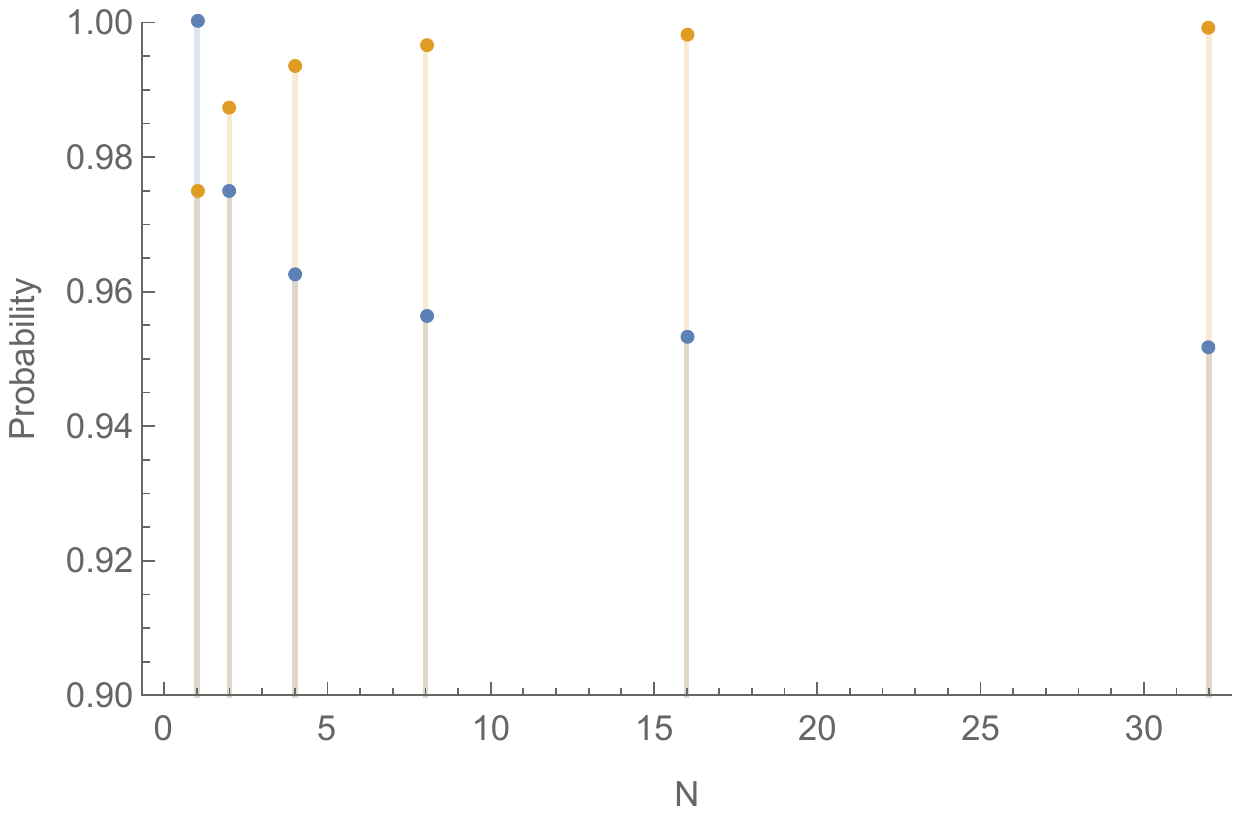}
	\caption{\label{fig:post vs no post} Probability of single photon being detected at the $a$ output port as shown in Figure~\ref{fig:MZ_setup} as a function of the redundant encoding size $N=2^n$. Here the phase shifts are sampled from a distribution with mean value $0$ and variance $v=0.1\ \textrm{rad}^{2}$. The blue values give the probability of success and the orange values corresponds to the probability of obtaining the correct result conditional on the photon not exiting the added redundant encoding modes, that is, with post-selection.  Eq.~(\ref{eq:1pNarbitrary successs}) predicts an asymptote of $0.95$ without post-selection and Eq.~(\ref{eq:1pWithPost}) predicts an asymptote of $1$ with post-selection.  The asymptotic behaviour is consistent with the plotted data.}
\end{figure}

\subsection{2 photon inputs\label{2 photons N arbitrary}}

The single photon interference effects in linear networks can be explained using classical wave interference.  Now we will consider two photon interference to demonstrate the behaviour of quantum interference when using the redundant encoding. As such here $\left|1,1\right\rangle$ is used as the input state. Again, a diagram of the explicit set-up with and without the redundant encoding can be seen in Figure \ref{fig:MZ_setup}. For two photons the un-normalised output state for the $a$ and $b$ modes is
\begin{widetext}
\begin{equation}
	\ket{\psi} =  \frac{1}{2}\left\{ 1+\frac{1}{N^{2}}\left(\sum_{j=1}^{N}\sum_{k=1}^{N}e^{i(\delta_{j}+\delta_{k})}\right)\right\} \ket{1,1} \\
	+\frac{\sqrt{2}}{4}\left\{ \frac{1}{N^{2}}\left(\sum_{j=1}^{N}\sum_{k=1}^{N}e^{i(\delta_{j}+\delta_{k})}\right)-1\right\} \left(\ket{2,0} +\ket{0,2}\right)\label{eq:2pNarbitrary S}
\end{equation}
\end{widetext}
where we have chosen $\theta=0$ when computing this state. This is done, as above, to simplify the form of the equations and does not change the effect of the redundant encoding on the errors.  Because of this choice, the action of the interferometer on the input state should be the identity operation and hence $\ket{1,1}$ is the desired output state.   Note that we could have chosen the input state to be $\ket{2,0}$, but this would not necessarily show any new behaviour, just the single photon results independently applied to the two input photons. 

We can again write probabilities as expectation values of occupation number.  Using the form of Eq.~(\ref{eq:2pNarbitrary S}), the ideal output is achieved when
\begin{equation}
\bra{\psi}\hat{a}^{\dagger}\hat{a}\hat{b}^{\dagger}\hat{b}\ket{\psi} = 1.
\end{equation}
This expectation value for the state including the phase shift noise is
\begin{eqnarray}
\left\langle \psi\right|\hat{a}^{\dagger}\hat{a}\hat{b}^{\dagger}\hat{b}\left|\psi\right\rangle & = & \left\langle \left|\frac{1}{2}\left\{ 1+\frac{1}{N^{2}}\left(\sum_{j=1}^{N}\sum_{k=1}^{N}e^{i(\delta_{j}+\delta_{k})}\right)\right\} \right|^{2}\right\rangle \nonumber \\
& = & \left\langle \frac{1}{4}\left(1+\frac{2}{N^{2}}\left(\sum_{j=1}^{N}\sum_{k=1}^{N}\cos\left(\delta_{j}+\delta_{k}\right)\right)\right)\right\rangle \nonumber \\
&  & +\frac{1}{4}\Biggl\langle\frac{1}{N^{4}}\left(\sum_{j}^{N}e^{-2i\delta_{j}}+\sum_{j=1}^{N}\sum_{k\ne j}^{N}e^{-i(\delta_{j}+\delta_{k})}\right)\nonumber \\
&  & \times\left(\sum_{l}^{N}e^{2i\delta_{l}}+\sum_{l=1}^{N}\sum_{m\ne l}^{N}e^{i(\delta_{l}+\delta_{m})}\right)\Biggr\rangle\nonumber \\
& \approx & 1-v\label{eq:exp. value aabb}
\end{eqnarray}
where the approximation is assuming $v$ small.  Post-selection will increase this to 
\begin{widetext}
\begin{eqnarray}
P(\textrm{coincidence}) & = & \frac{\left\langle \psi\right|\hat{a}^{\dagger}\hat{a}\hat{b}^{\dagger}\hat{b}\left|\psi\right\rangle }{\left\langle \psi\right|\hat{a}^{\dagger}\hat{a}\hat{b}^{\dagger}\hat{b}\left|\psi\right\rangle +0.5\left\langle \psi\right|\hat{a}^{\dagger}\hat{a}^{\dagger}\hat{a}\hat{a}\left|\psi\right\rangle +0.5\left\langle \psi\right|\hat{b}^{\dagger}\hat{b}^{\dagger}\hat{b}\hat{b}\left|\psi\right\rangle }\nonumber \\
& \approx & 1-\frac{v}{2N}\label{eq:2pNarb PS}
\end{eqnarray}
\end{widetext}
where $P(\textrm{coincidence})$ is the probability the photons exit modes $a$ and $b$ individually and the binomial approximation has been used to keep only variance terms to first order.  Finally the probability of success, which is the probability no photons exit the encoding modes is 
\begin{eqnarray}
	P(\textrm{success}) & \approx & 1-v+\frac{v}{2N}.\label{eq:2pNarb Success}
\end{eqnarray}
Again, the large $N$ limit of this equation the result matches the prediction of Equation~\ref{eq:Gaussian Psuccess}. Also, the probability of success has a $\frac{1}{N}$ scaling which is the same as for the single photon input case. 

In this section we have demonstrated how redundantly encoding variable components can reduce the resulting variance within a system for the simple but highly important case of a single beam splitter. We have also shown that the results match what is expected from the couple of solved exact cases discussed in Section \ref{gen case}. In the following section we will give some more complex examples to give a clearer insight into how this redundant encoding might best be applied, and its effect in the situations where the mathematical machinery introduced earlier is not easily solvable.

\section{Comparison between averaging techniques \label{averaging at end vs step}}

In this section we will study two different methods, which we will refer to as \textit{averaging at the end} and \textit{averaging each step}. To illustrate the two approaches we consider a simple system of phase-shifters. Fig.~\ref{fig:Different methods of implementation} shows schematically these two configurations as well as a baseline comparison. The system analysed is applying a single mode phase-shift generated by $M$ sequential phase shifters.  Averaging across the entire system applies the $M$ phases and redundantly encoding this $N$ times (Figure~\ref{fig:Different methods of implementation b}). The method of averaging each step involves a redundancy of $N$ for each of the $M$ applied phase shifts (Figure~\ref{fig:Different methods of implementation c}).  When averaging each component individually significantly more encoding beam-splitters are required, however we will show this leads to more stability in the output state for larger errors. In the low error limit however these two methods yield equivalent results. Because of this the difference in clearer when results are taken to the the higher order and as such in the following section all approximations will be taken to the second order in the variance as opposed to the first order as done above.

\begin{figure}
	\centering
	\begin{subfigure}[b]{\columnwidth}
		\includegraphics[width=0.65\columnwidth]{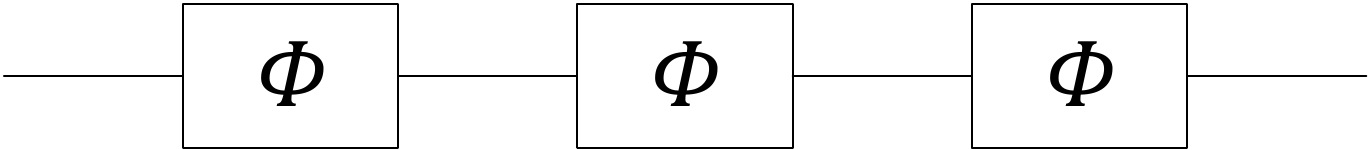}
		\caption{}
		\label{fig:Different methods of implementation a} 
	\end{subfigure}
	
	\begin{subfigure}[b]{\columnwidth}
		\includegraphics[width=0.80\columnwidth]{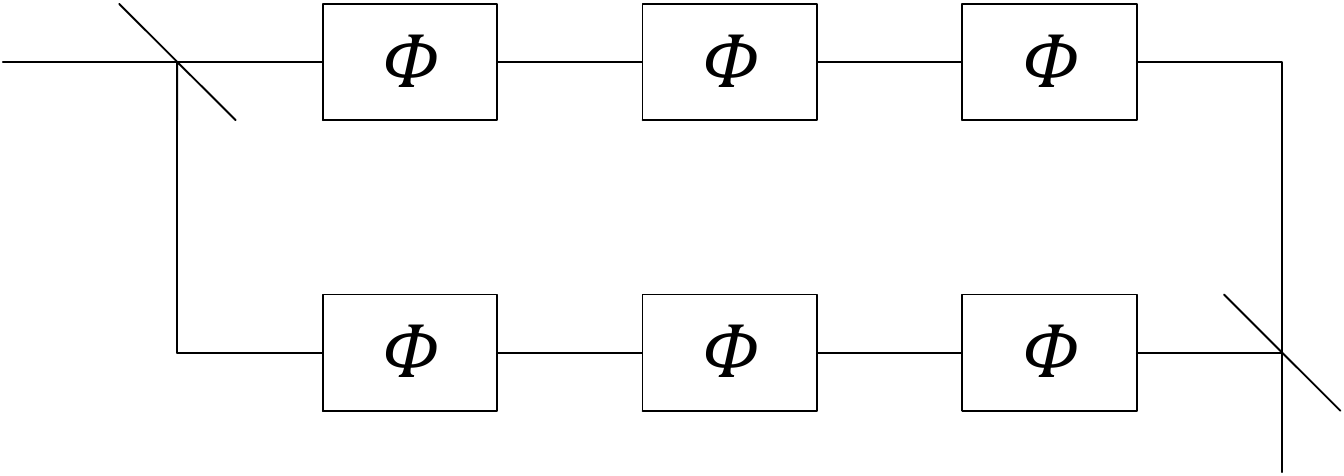}
		\caption{}
		\label{fig:Different methods of implementation b}
	\end{subfigure}

	\begin{subfigure}[b]{\columnwidth}
		\includegraphics[width=1\columnwidth]{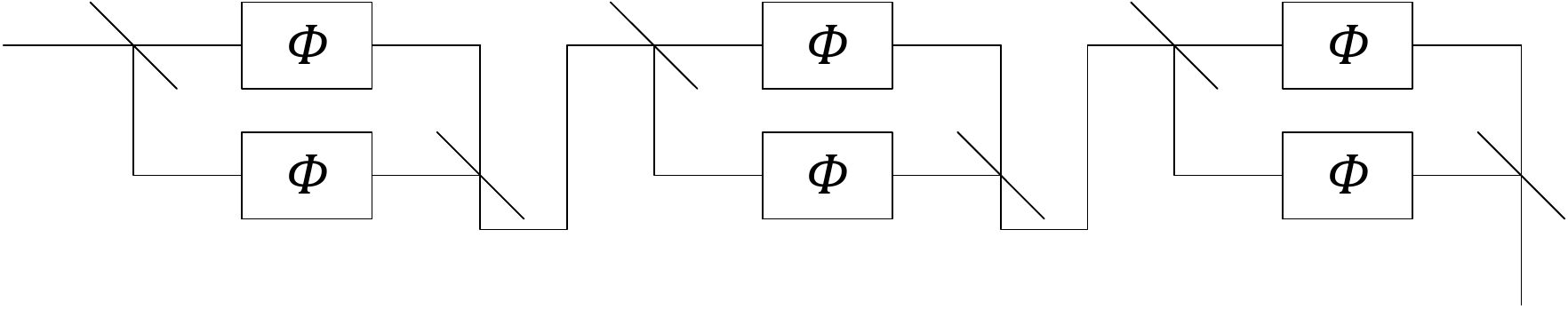}
		\caption{}
		\label{fig:Different methods of implementation c}
	\end{subfigure}
	
	\caption[Two numerical solutions]{Three methods of applying three phase shifts, each marked with a in
			series. (a) Three phase shifts with no error averaging. (b) Three phase
			shifts when averaging across the system. (c) Three phase shifts when
			averaging across each phase shifter individually. Averaging across the
			system will in general require far fewer encoding resources.
			\label{fig:Different methods of implementation}}
\end{figure}
%\subsection{Single Mode, Single Parameter Examples \label{Phase applying systems}}

Following an approach motivated by the previous section the applied phase shifters were placed in one arm of a MZ interferometer. The applied phase shift was chosen to have mean zero with a Gaussian random noise with a variance $v$. This choice allows for the errors here to be compared with those modelled in sections~\ref{1 photon N arbitrary} and~\ref{2 photons N arbitrary}. The probability of success is now defined as the photon number expectation value evaluated at the end of the phase applying systems while the strength of the error is defined as the photon number expectation value of the MZ interferometer system at the expected output given no photon exited any of the redundant encoding modes. 

%All approximate results are based of fourth order Taylor approximations with comparisons drawn to the second order approximations as used in previous sections.

\subsection{No Averaging\label{No Averaging}}

Starting with the baseline comparison case where no Error Averaging is used (Fig.~\ref{fig:Different methods of implementation a}), the output state for a single photon going through $M$ phase shifters will be
\begin{equation}
\left|\psi\right\rangle =\left(\prod_{k=1}^{M}e^{i\delta_{k}}\right)\left|1\right\rangle \label{eq:noAvPhaseState}
\end{equation}
As there is no path for the photon to exit the system, the probability of success is always $1$.

To quantify the error the phase applying system it was then inserted into a MZ interferometer giving a total output state $\ket{\Psi}$.  As the mean phase shift is zero, the error is manifest in the photon expectation value at the correct output mode after post-selection. As however the probability of success is $1$ no post selection occurs here.  The output state from the MZ interferometer is
\begin{equation}
	\ket{\Psi} =\frac{1}{2}\left(\prod_{k=1}^{M}e^{i\delta_{k}}+1\right)\hat{a}^{\dagger}\ket{0}+\left(\prod_{k=1}^{M}e^{i\delta_{k}}-1\right)\hat{b}^{\dagger}\ket{0}. \label{eq:noAveIntState}
\end{equation}
The measure to the quantity of error which was used to compare the three situations chosen is the probability of observing the correct result conditional on the photon not being detected in an error mode, or $P(\textrm{correct})$. For no averaging this is
\begin{eqnarray}
P(\textrm{correct}) & = & \left\langle \Psi\right|\hat{n}_{a}\left|\Psi\right\rangle \nonumber \\
& = & \frac{1}{2}\left\langle 1+\cos\left(\alpha\right)\right\rangle \nonumber \\
& \approx & 1-\frac{Mv}{4}+\frac{M^{2}v^{2}}{16}\label{eq:ErrorNoAv1}
\end{eqnarray}
where $\alpha=\sum_{k=1}^{M}\delta_{k}$ and Gaussian statistics have been used to write higher order moments in terms of the variance.

\subsection{Averaging Across the Entire Phase System\label{Averaging Across the Entire Phase System}}

We now consider averaging across the whole system, as shown in Figure~\ref{fig:Different methods of implementation b}. Proceeding as before, the state for a single photon after passing through $M$ phase shifters in series which is being averaged across $N$ times will simply be
\begin{equation}
	\ket{\psi}=\frac{1}{N}\sum_{j=1}^{N}\left(\prod_{k=1}^{M}e^{i\delta_{j,k}}\right)\ket{1} \label{eq:AvEndPhaseState}
\end{equation}
The probability of success is thus
\begin{eqnarray}
	P\left(success\right) & = & \braket{\psi|\psi} \nonumber \\
& \approx & \left[1-\left(1-\frac{1}{N}\right)\left(Mv-\frac{1}{2}M^{2}v^{2}\right)\right]\label{eq:AveEndProbSuccess}
\end{eqnarray}
This result is similar to the what was found in previous sections, see Eq.\ref{eq:1pNarbitrary successs} and Eq.\ref{eq:2pNarb Success}, with the probability of success asymptotically approaching some fixed value for large $N$.

To determine the size of the error, the phase applying system was again inserted into one arm of a MZ interferometer giving a total output state $\ket{\Psi}$. The error is then given by the photon expectation value in the correct output mode with post selection. The output state is
\begin{eqnarray}
	\ket{\Psi}& = &\frac{1}{2}\left(\frac{1}{N}\sum_{j=1}^{N}\left(\prod_{k=1}^{M}e^{i\delta_{j,k}}\right)+1\right)\hat{a}^{\dagger}\ket{0}\\ & & +\left(\frac{1}{N}\sum_{j=1}^{N}\left(\prod_{k=1}^{M}e^{i\delta_{j,k}}\right)-1\right)\hat{b}^{\dagger}\ket{0}\label{eq:AveEndIntState}
\end{eqnarray}
So the photon number expectation value for the expected output from the interferometer will be
\begin{widetext}
\begin{eqnarray}
	\bra{\Psi}\hat{n}_{a}\ket{\Psi}
& \approx & 1-\frac{1}{4}\left(Mv-\frac{M^{2}v^{2}}{4}+\left(1-\frac{1}{N}\right)\left(Mv-\frac{1}{2}M^{2}v^{2}\right)\right)
\end{eqnarray}
Similarly for the incorrect output port, the photon number expectation value will be
\begin{equation}
	\bra{\Psi}\hat{n}_{b}\ket{\Psi} = \frac{1}{4}\left\langle 1+\left\langle \psi|\psi\right\rangle -\frac{2}{N}\sum_{j=1}^{N}\cos\left(\alpha_{j}\right)\right\rangle 
\end{equation}
Therefore, our error measure, the conditional probability of observing the correct result, will now be
\begin{equation}
P(\textrm{correct})  \approx  \left[1-\frac{1}{4}\left(Mv-\frac{M^{2}v^{2}}{4}+\left(1-\frac{1}{N}\right)\left(Mv-\frac{1}{2}M^{2}v^{2}\right)\right)\right]\nonumber\times\left[1-\left(1-\frac{1}{N}\right)\left(\frac{Mv}{2}-\frac{1}{4}M^{2}v^{2}\right)\right]^{-1}\label{eq:ErrorAvEnd}
\end{equation}
\end{widetext}

\subsection{Averaging Across Each Phase Shifter Individually\label{Averaging Across Each Phase Shifter Individually}}

If each phase shifter is averaged individually, as seen in Figure \ref{fig:Different methods of implementation c}, then the state for a single photon after passing through the phase applying system will be
\begin{equation}
	\ket{\psi}=\frac{1}{N}\prod_{k=1}^{M}\left(\sum_{j=1}^{N}e^{i\delta_{j,k}}\right)\ket{1}\label{eq:AveStepPhaseState}
\end{equation}
Reproducing the above calculations with this state yields a probability of success of
\begin{equation}
P\left(Success\right)\approx\left(1-\left(v-\frac{v^{2}}{2}\right)\left(1-\frac{1}{N}\right)\right)^{M}\label{eq:AvStepProbSuccess}
\end{equation}
and a conditional probability of observing the correct result of
\begin{widetext}
\begin{equation}
P(\textrm{correct}) \approx  \left[\frac{3}{4}-\frac{Mv}{4}+\frac{M^{2}v^{2}}{16}+\frac{1}{4}\left(1-\left(v-\frac{v^{2}}{2}\right)\left(1-\frac{1}{N}\right)\right)^{M}\right]\nonumber\times\left[\frac{1}{2}+\frac{1}{2}\left(1-\left(v-\frac{v^{2}}{2}\right)\left(1-\frac{1}{N}\right)\right)^{M}\right]^{-1}\label{eq:ErrorAvStep}
\end{equation}
\end{widetext}
Importantly, for both this case as well as when averaging each step, if only the first order approximation is used and $M=1$ then the error matches the error found in section \ref{1 photon N arbitrary}. However we see that with the second order terms included the two results diverge from one another. This can be seen most clearly in Figure \ref{fig:Probability-of-success all}.

\subsection{Summary of Errors and Probabilities\label{Summary of Errors and Probabilities}}

\begin{figure}
	\centering
	\begin{subfigure}[b]{\columnwidth}
		\includegraphics[width=1\columnwidth]{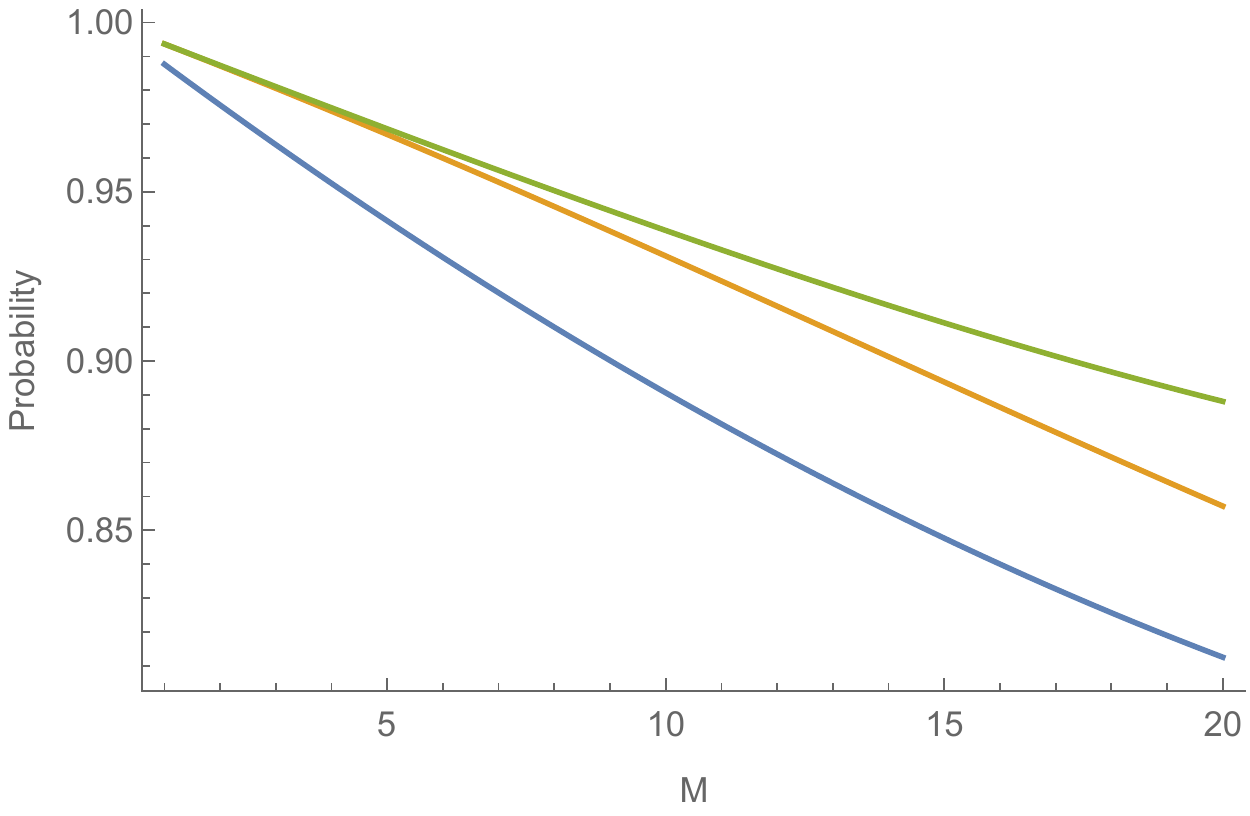}
		\caption{}
		\label{fig:Prob correct for phase systems a} 
	\end{subfigure}
	
	\begin{subfigure}[b]{\columnwidth}
		\includegraphics[width=1\columnwidth]{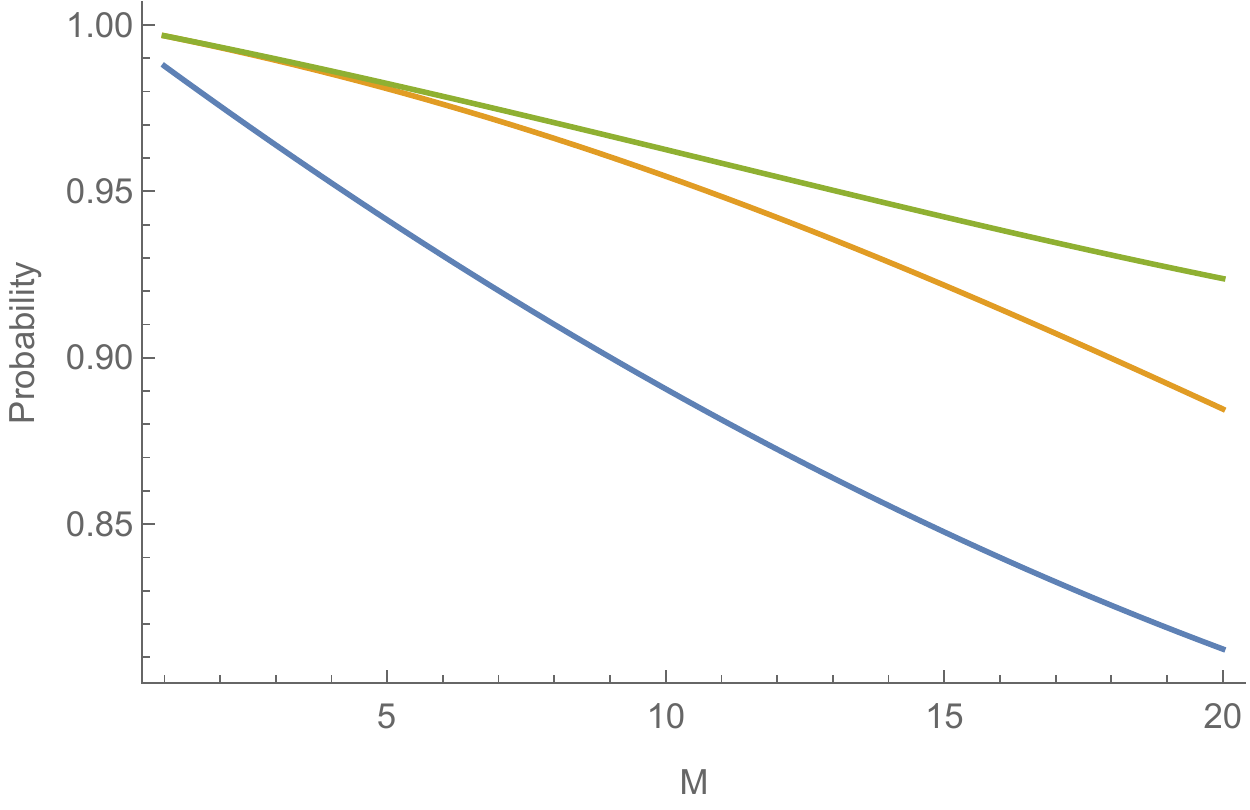}
		\caption{}
		\label{fig:Prob correct for phase systems b}
	\end{subfigure}
	
	\begin{subfigure}[b]{\columnwidth}
		\includegraphics[width=1\columnwidth]{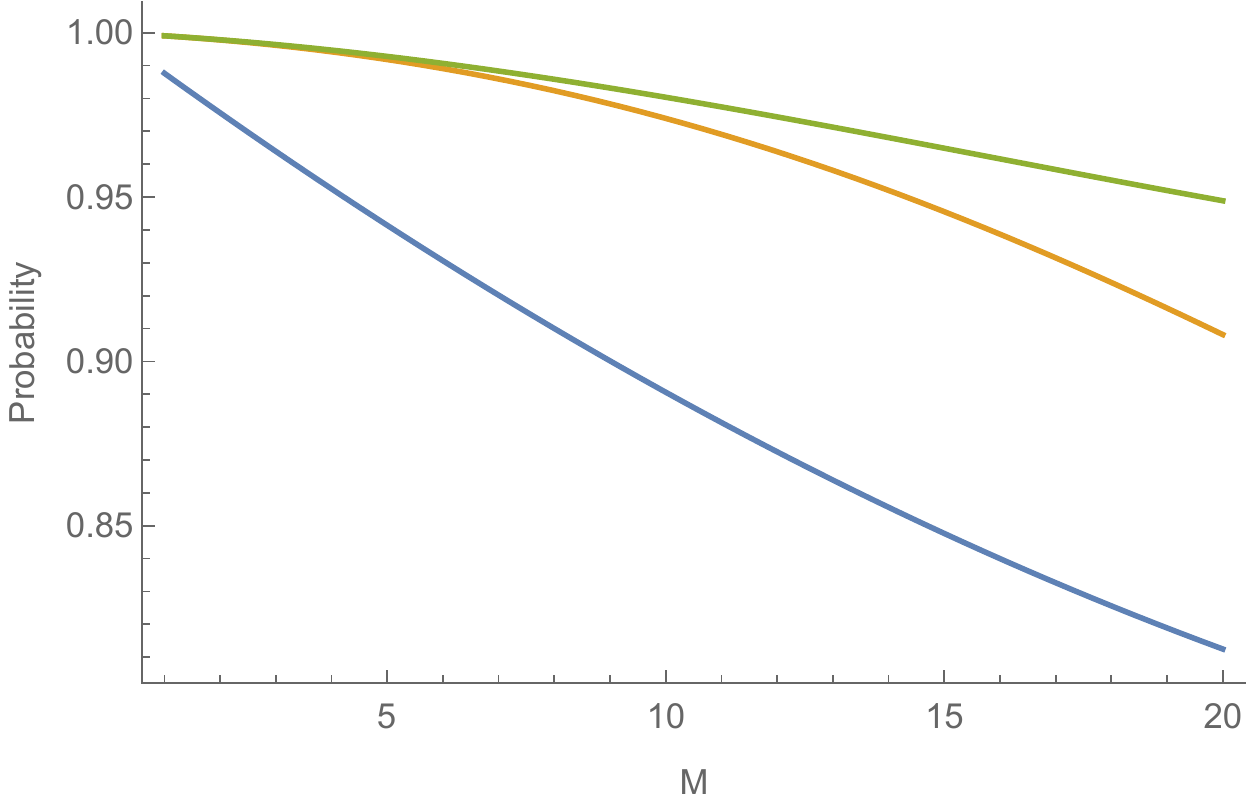}
		\caption{}
		\label{fig:Prob correct for phase systems c}
	\end{subfigure}
	
	\caption[Prob correct for phase systems]{Probability of obtaining the correct result as measured by the Mach-Zehnder interferometer set up as a function of the number of phase components $M$. Here a probability of $1$ corresponds to no error and the smaller the probability the larger the error. The blue line represents the no Error Averaging applied result, the orange line corresponds to the error when averaging across the entire system and the green line is the error when each component is averaged across individually. All three graphs were created with the variance of the error in a single phase shifter being $0.005\ \textrm{rad}^{2}$ and for (a) $N=2$, in (b) $N=4$ and in (c) $N=16$. \label{fig:Error-as-measured all}}
\end{figure}
Figure~\ref{fig:Error-as-measured all} shows how the error, as measured by looking at expected photon number values in the output port of a MZ interferometer, varies as the number of phase components increases as well as how the error changes with increasing Error Averaging, $N$. The behaviour as $N$ increases is as expected with the error close to disappearing for low $M$, that is, a small number of phase shifters in series, and $N=16$. Interestingly a difference between the two Error Averaging methods can be seen from $M\approx6$ onwards. This could either be suggesting an issue with the quality of the second order approximations, as seen in Figure~\ref{fig:Probability-of-success all} or that there is some more fundamental point at which there is a clear benefit to averaging each component individually. The next consideration was how the probability of success changes with $M$.

Figure~\ref{fig:Probability-of-success all} shows how the probability of success changes as the number of phase shifters in a series increases when averaging across the entire system as well as when averaging across each component individually. The effect of varying the amount of averaging is also shown for both the first and second order analytical solution. The top four graphs were plotted for a low value of the variance on the individual phase shifters. This was done so that the behaviour when the first and second order approximations diverge can be clearly seen. 

As the total number of components increases with both increasing $M$ and $N$, the probability of success decreases. However it does so at a decreasing rate which is important for scaling to large systems. The two methods of Error Averaging also show very similar behaviour in their overall trends although the variation between the first and second order approximations in the two encoding methods diverges. This is suggestive of a manifestation of the Zeno effect, whereby continuously correcting produces less variation than doing the same amount of correction at the end. First and second order solutions in the averaging over the entire system case diverge very early when compared with those for averaging every step. Interestingly it appears that the first order analytical approximation is suitable when averaging each component individually even for larger or equivalently higher error systems. This can most clearly be seen in Figure~\ref{fig:phase psuccess e} where the first order approximation diverges from the second order approximation almost instantly while in Figure~\ref{fig:phase psuccess f} the first and second order approximations both follow each other closely. It is again observed that as $N$ increases, the probability of success goes down. This could also be suggesting that the variation in the statistical simulation is also reduced, implying a greater amount of averaging reduces the variability in any given sample of the applied phase.
\begin{figure}
	\centering
	
	\begin{subfigure}[b]{.45\columnwidth}
		\includegraphics[width=\columnwidth]{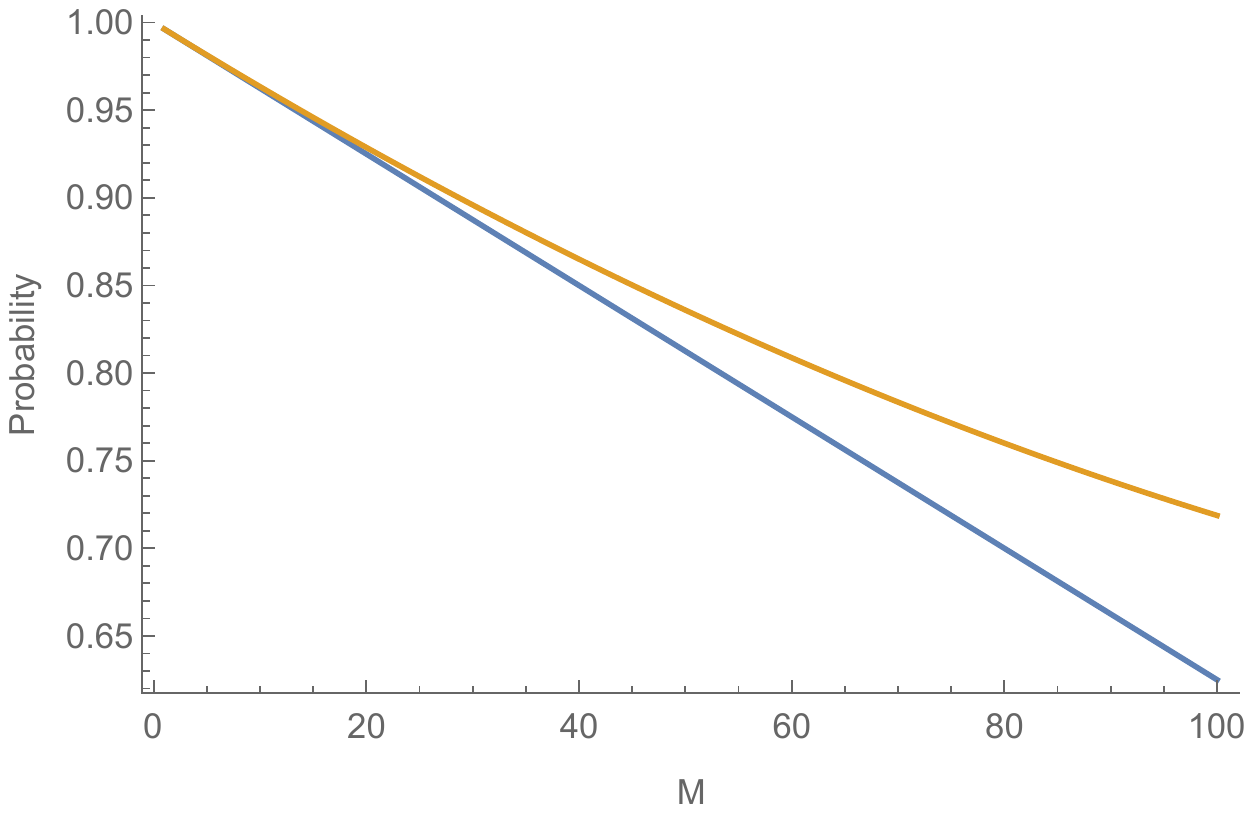}
		\caption{}\label{fig:phase psuccess a}
	\end{subfigure}
	\begin{subfigure}[b]{.45\columnwidth}
		\includegraphics[width=\columnwidth]{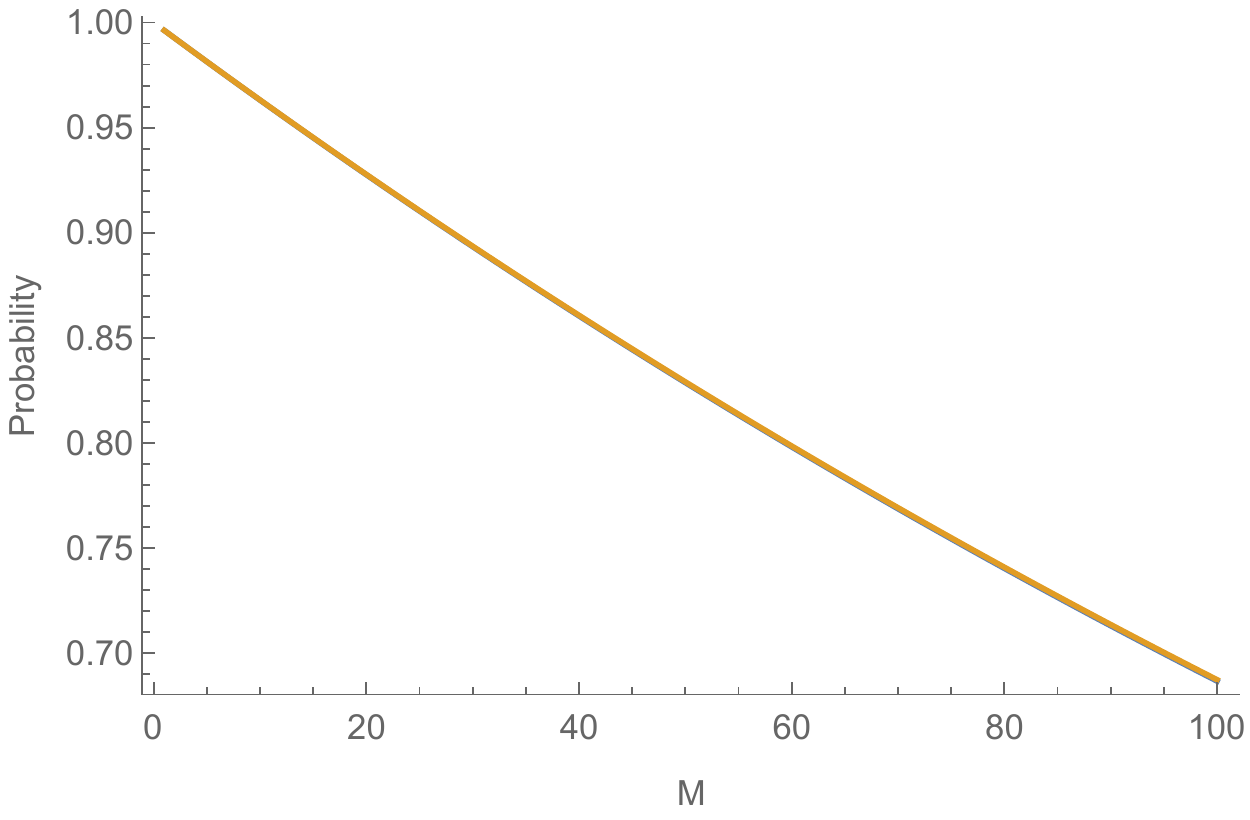}
		\caption{}\label{fig:phase psuccess b}
	\end{subfigure}

	\begin{subfigure}[b]{.45\columnwidth}
		\includegraphics[width=\columnwidth]{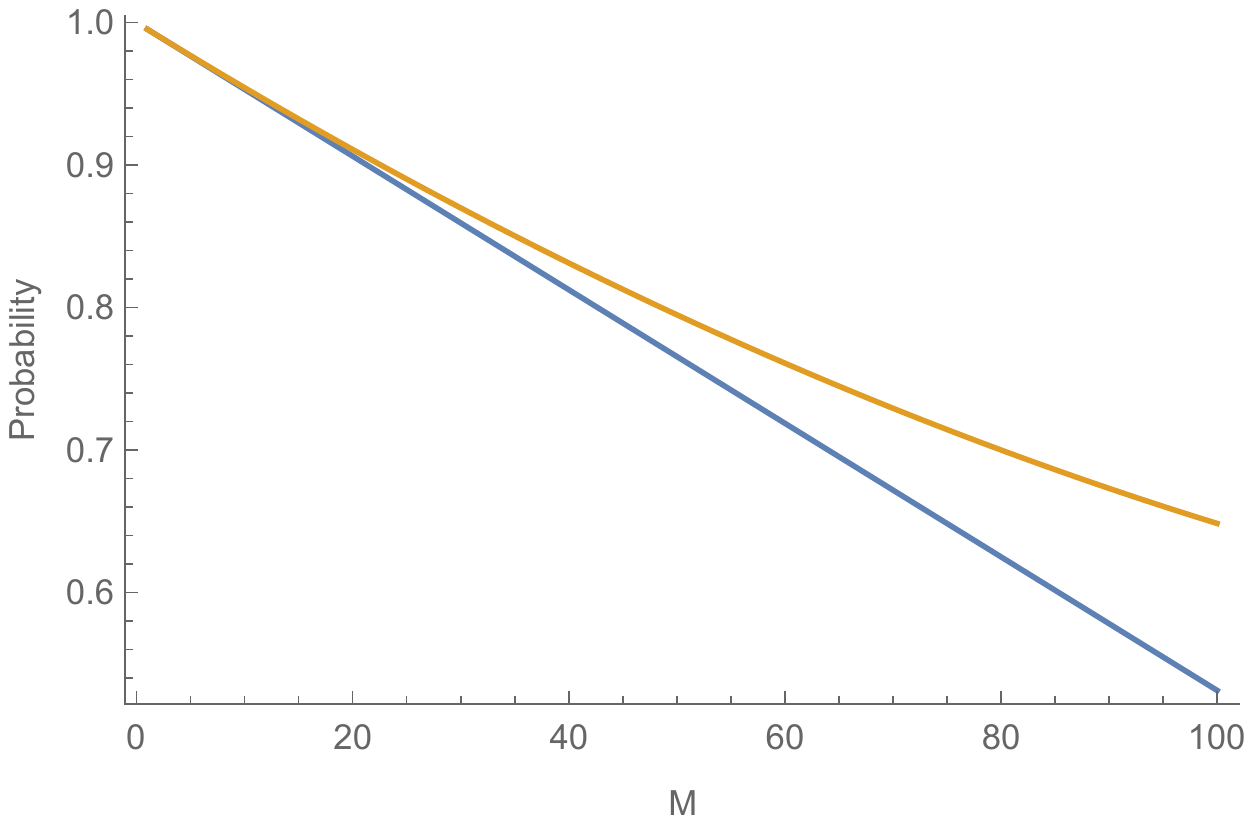}
		\caption{}\label{fig:phase psuccess c}
	\end{subfigure}
	\begin{subfigure}[b]{.45\columnwidth}
		\includegraphics[width=\columnwidth]{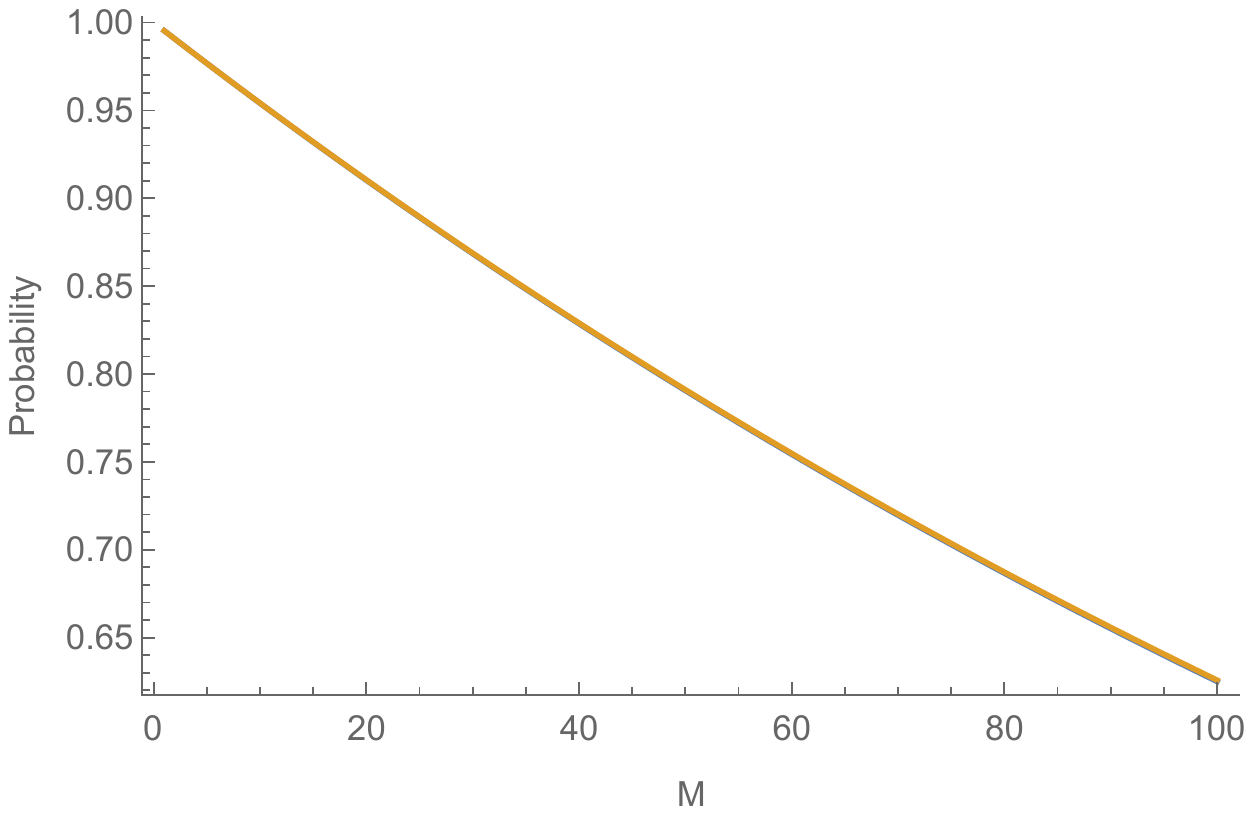}
		\caption{}\label{fig:phase psuccess d}
	\end{subfigure}

	\begin{subfigure}[b]{.45\columnwidth}
		\includegraphics[width=\columnwidth]{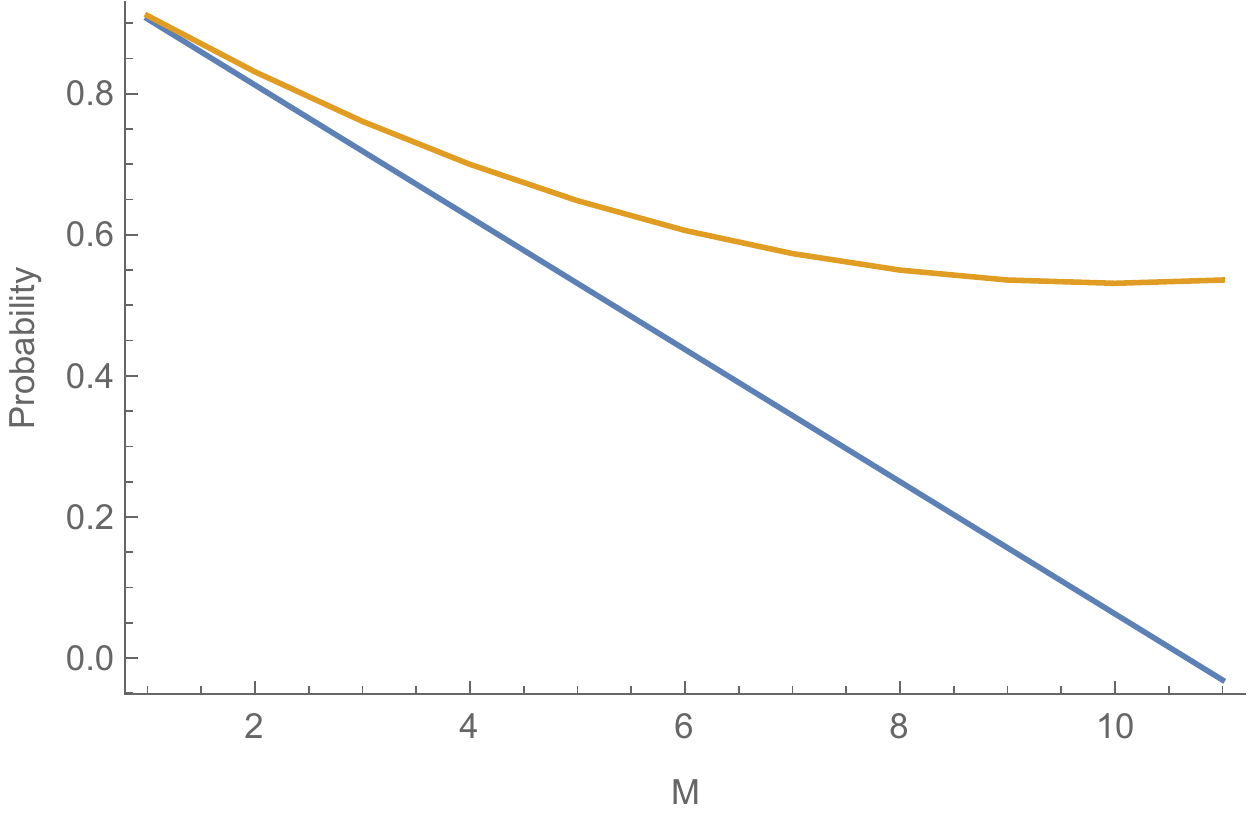}
		\caption{}\label{fig:phase psuccess e}
	\end{subfigure}
	\begin{subfigure}[b]{.45\columnwidth}
		\includegraphics[width=\columnwidth]{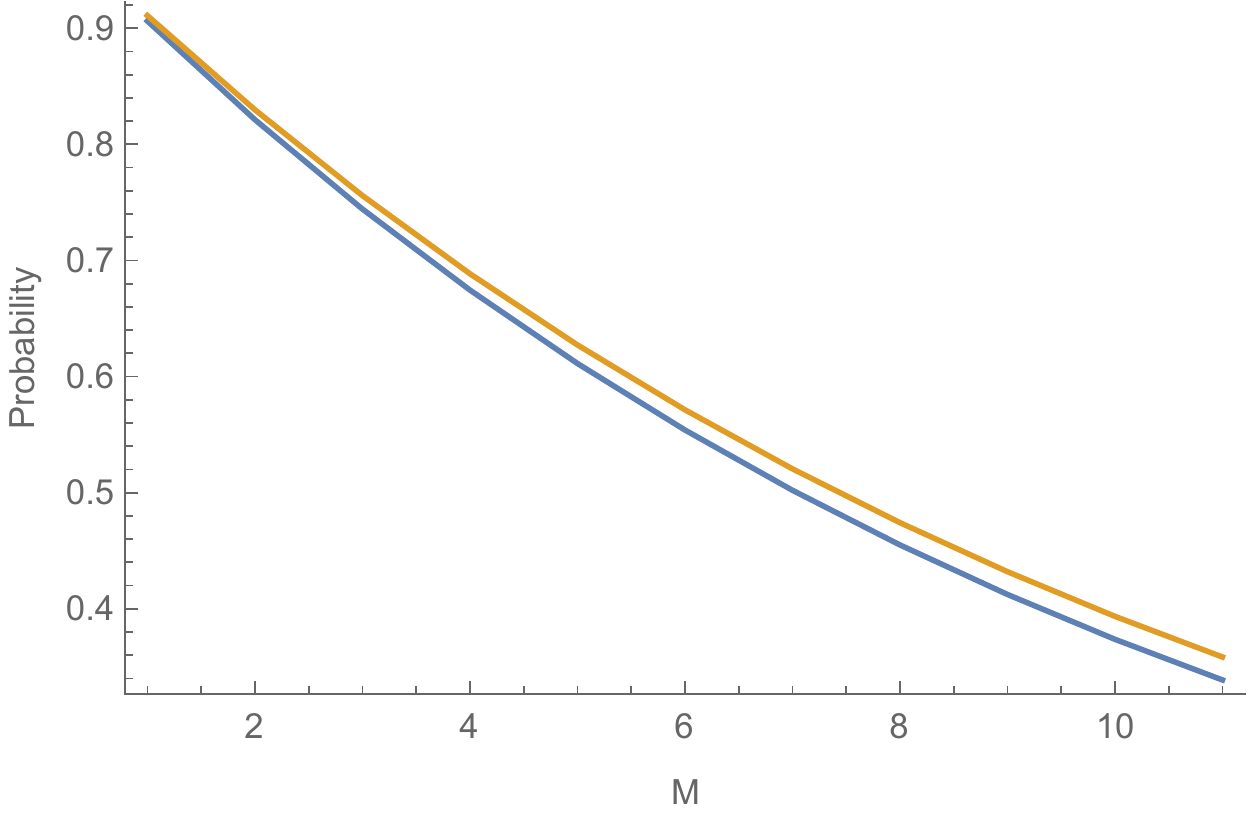}
		\caption{}\label{fig:phase psuccess f}
	\end{subfigure}
	
	\caption[Prob success for phase systems]{Probability of success as a function of the number of phase components $M$ for averaging across the system (left) and averaging across each component (right). The blue line is the first order approximation and orange is the second order approximation of the analytical value. a) and b) individual phase shifter variance is $0.005\ \textrm{rad}^{2}$ with $N=4$, c) and d) individual phase shifter variance is $0.005\ \textrm{rad}^{2}$ with $N=16$ and e) and f) individual phase shifter variance is $0.1\ \textrm{rad}^{2}$ with $N=16$. \label{fig:Probability-of-success all}}
\end{figure}

%To gain a better insight into how the error behaves and what the difference between the two averaging methods actually is, the actual phase being applied with each system was modelled, as discussed in the following subsection.

\subsection{Statistical Modelling of the Applied Phase\label{Statistical Modelling of the Applied Phase}}

Given the variability in the higher order terms for larger errors in can be concluded that, in general all orders need to be considered to fully understand the behaviour of the corrected systems. As this is intractable and to better understand the behaviour of the three phase applying systems, the total applied phase was modelled numerically using \textit{Mathematica} with phase values chosen from a Gaussian random distribution with mean $0$ and variance $v$. This corresponds to Equation~(\ref{eq:single parameter, single mode}) with $\theta=\prod_{i}\theta_{i}$. This was repeated $5000$ times and the results are shown in Figures \ref{fig:Total-applied-phase1} and \ref{fig:Total-applied-phase2}. This again shows a difference between averaging across the entire system and averaging at each step. The variability of the total applied phase is smaller when each phase shifter is corrected individually, an indication that averaging each step is more effective. By comparing Figure~\ref{fig:Total-applied-phase1} with Figure~\ref{fig:Error-as-measured all} at $M=15$ we can infer that the difference between the two error correction methods seen in Figure~\ref{fig:Error-as-measured all} is not entirely due to the quality of the approximations used in each case.
\begin{figure}
\centerline{\includegraphics[width=\columnwidth]{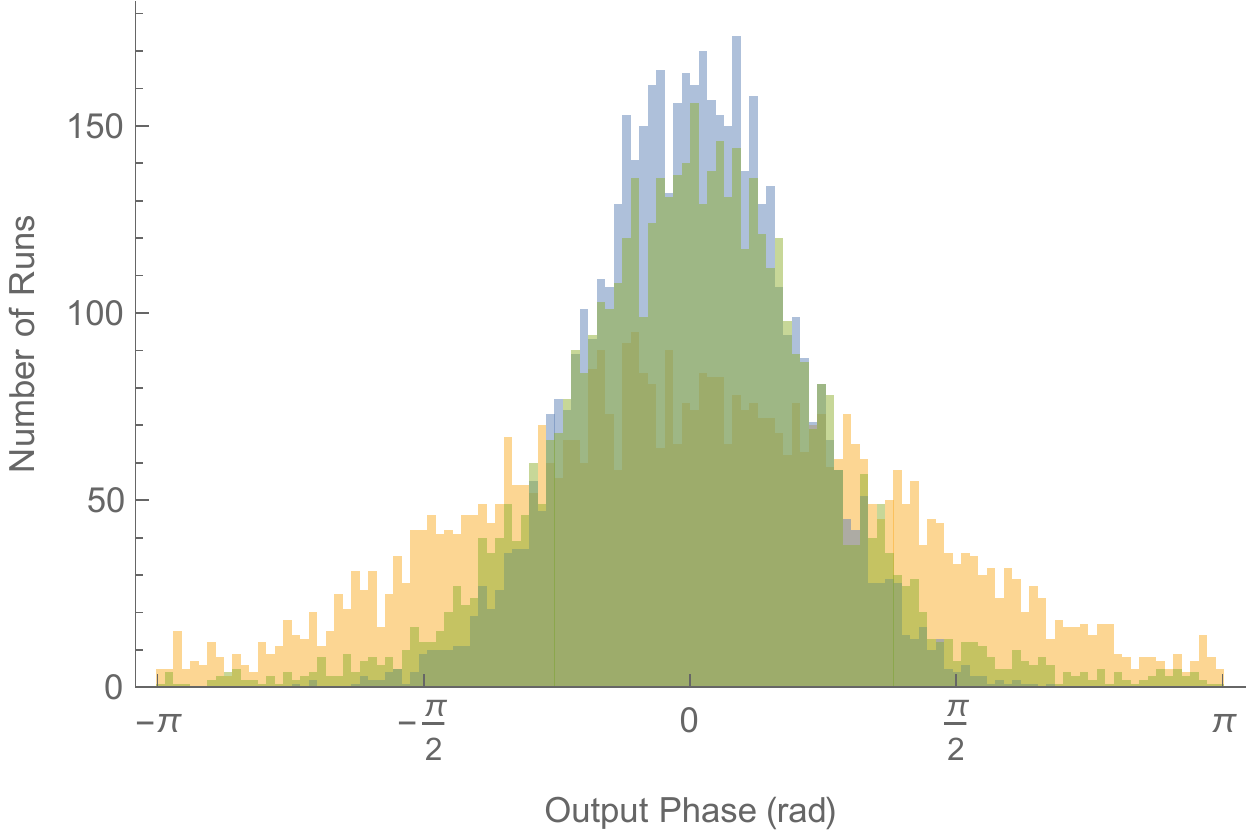}}
\caption{Top: Total applied phase over $5000$ runs for no averaging (orange), averaging across the entire system (green) and averaging each phase shifter individually (blue). Bottom: Histogram of the total applied phases. Each individual phase shifter has a variance of $0.1\ \textrm{rad}^{2}$ and each system has $15$ phase shifters in series. The two error averaged circuits are each averaged $4$ times. \label{fig:Total-applied-phase1}}
\end{figure}
\begin{figure}
\centerline{\includegraphics[width=\columnwidth]{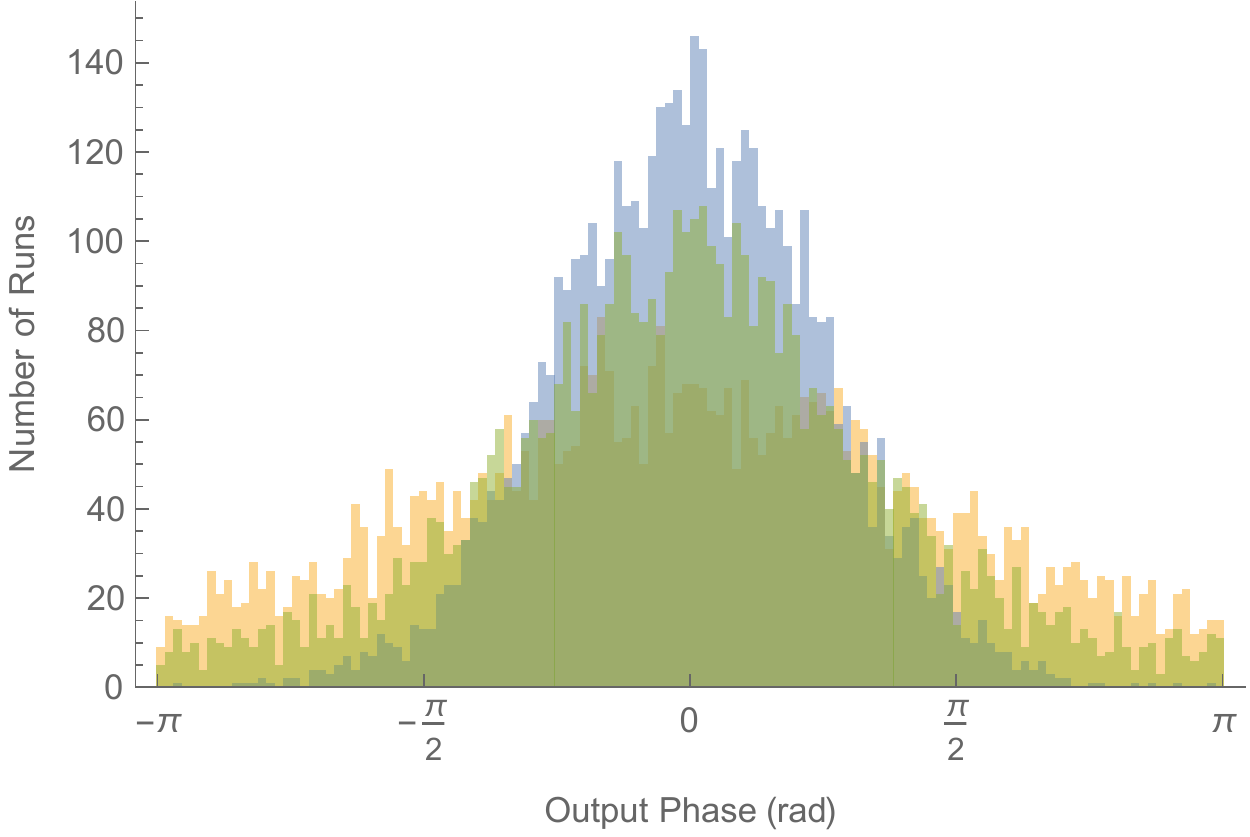}}
\caption{Top: Total applied phase over $5000$ runs for no averaging (orange), averaging across the entire system (green) and averaging each phase shifter individually (blue). Bottom: Histogram of the total applied phase. Each individual phase shifter has a variance of $0.3\ \textrm{rad}^{2}$ and each system has $8$ phase shifters in series. The two error averaged circuits are each averaged $4$ times. \label{fig:Total-applied-phase2}}
\end{figure}

The variance of the applied phase was estimated based on the statistical simulation of the total applied phase. Figure~\ref{fig:Variance-in-phase all} shows this variance as a function of $M$, the number of phase shifters in a series. Given the individual applied phases are uncorrelated they are expected to simply add, such that the variance without any Error Averaging is expected to be
\begin{equation}
\textrm{Total Variance}=vM\label{eq:Tot Var no correction}
\end{equation}
where $M$ is the number of phase shifters and $v$ is the variance in the individual phase shifter. The total variance when Error Averaging is similarly expected to be
\begin{equation}
\textrm{Total Variance}=\frac{vM}{N}\label{eq:Tot Var w/ correction}
\end{equation}
where $N$ is the number of times the system is averaged, again $N=1$ implies no averaging. As the phase is an angle with a finite range, this behaviour cannot hold for arbitrarily large $vM$. A completely random phase $\theta$ is still limited by the possible range of values, in our case chosen to be $-\pi<\theta\le\pi$. If the value of $\theta$ is indeed completely random then one will expect a uniform probability distribution of $P\left(\theta\right)=\frac{1}{2\pi}$. This then implies the maximum variance will be given by
\begin{eqnarray}
\textrm{Maximum variance} & = & \int_{-\pi}^{\pi}\theta^{2}P\left(\theta\right)\ d\theta\nonumber \\
& = & \frac{\pi^{2}}{3}\label{eq:Max Var}
\end{eqnarray}

\begin{figure}
\centerline{\includegraphics[width=\columnwidth]{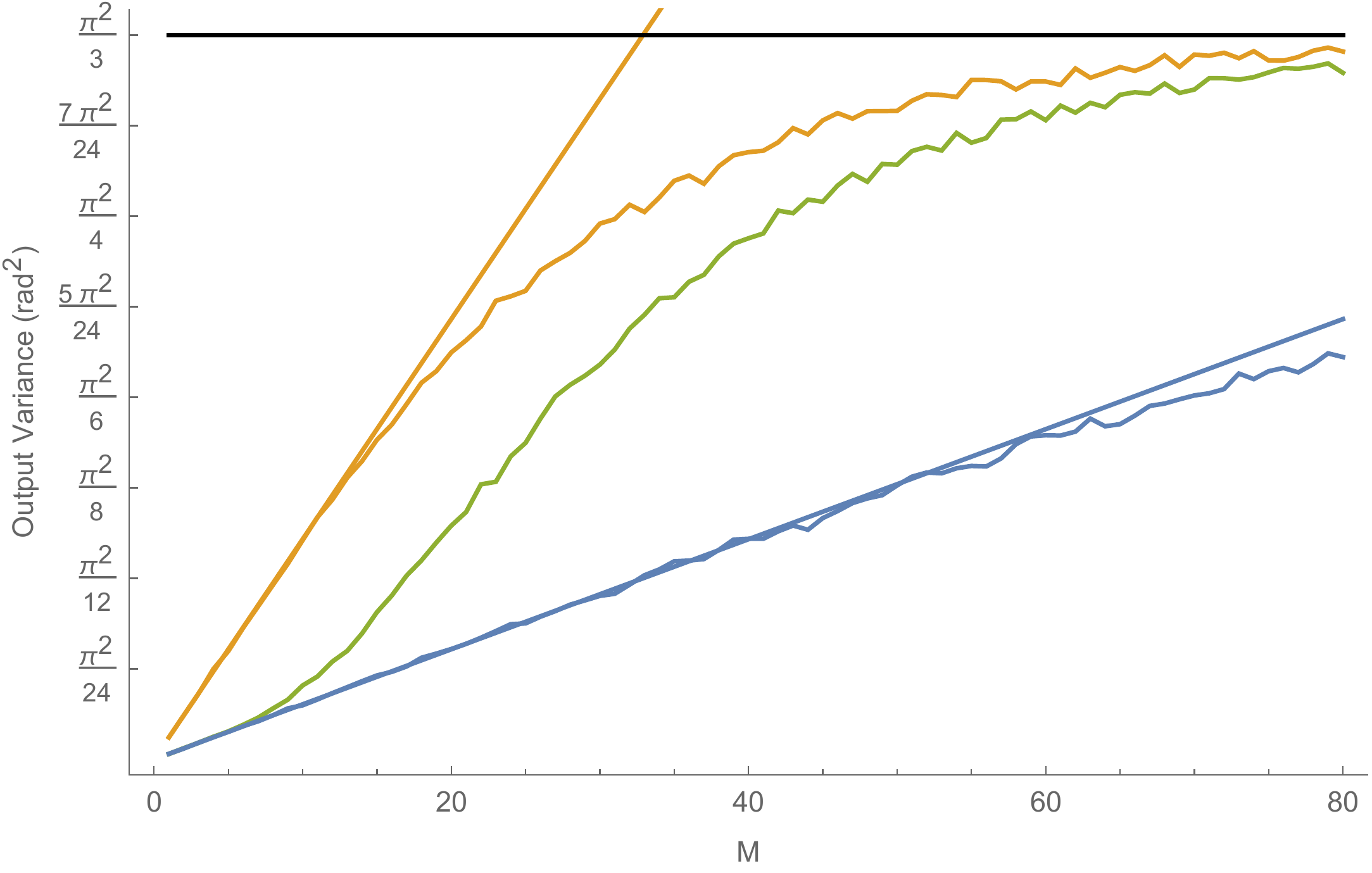}}
\caption{Variance in the total applied phase without Error Averaging (orange), when averaging across the entire system (green) and when averaging each component individually (blue), all plotted as a function of the number of phase shifters in series $(M)$. The variance of a single phase shifter is $0.1\ \textrm{rad}^{2}$ and the two error averaged systems averaged 4 times. The predicted, linear variance without any averaging (orange) and with averaging (blue) is also shown. These ignore the fact that the variance is actually the angular variance and so has some maximum allowable value given by Eq. \ref{eq:Max Var}, which is also shown in black. \label{fig:Variance-in-phase all}}
\end{figure}
Figure \ref{fig:Variance-in-phase all} shows that the two methods of error correction do indeed initially have the same effect. However the averaging across the system method departs from this linear regime from approximately $M=6$ after which it follows the general form of applying no correction. This suggests there is some limit to the total variation in a system Error Averaging can handle. This is not unexpected due to the limited domain for a phase shift or beam splitter ratio. The fact that averaging across the entire system mirrors the no averaging trend suggest that the positive effects of Error Averaging completely disappear in this regime. Averaging each step however does not appear to fall out of the linear regime. The lower slope at higher total output variance can be attributed to the variance approaching the maximum possible variance.  To determine if and when the averaging each step method of error correction fails, this process was repeated as a function of the variance in a single beam splitter. The total variance was determined from $50000$ data points for each value of $v$. Figure~\ref{fig:Variance(veriance)} demonstrated that again the two error correction methods initially are equivalent. Once more averaging across the system departs from the linear regime and now we can clearly see that so too does the averaging each step.

This is suggestive of the existence of some threshold for the amount of error in a system before Error Averaging fails to be beneficial. A single phase shifter with variance $v_{1}$ is effectively equivalent to $m$ phase shifters with individual variance $v_{2}=\frac{v_{1}}{m}$ if the entire system is being averaged across. This allows the phase error threshold to be estimated at about $0.5\ \textrm{rad}^{2}$ when averaging across each element and $\frac{0.5}{m}\textrm{rad}^{2}$ when averaging across a system of $m$ phase shifters. Explicitly, this suggests a phase variance threshold of $0.5\ \textrm{rad}^{2}$ within the corrected unitary. If each individual beam splitter has a variance of $0.1\ \textrm{rad}^{2}$, averaging across the system would be expected to be in the linear regime when $M\le5$ which is precisely what is seen in Figure \ref{fig:Variance-in-phase all}. These two thresholds obviously do not apply to a general error however it is hoped that with further study might reveal the values for such a threshold.
\begin{figure}
\centerline{\includegraphics[width=\columnwidth]{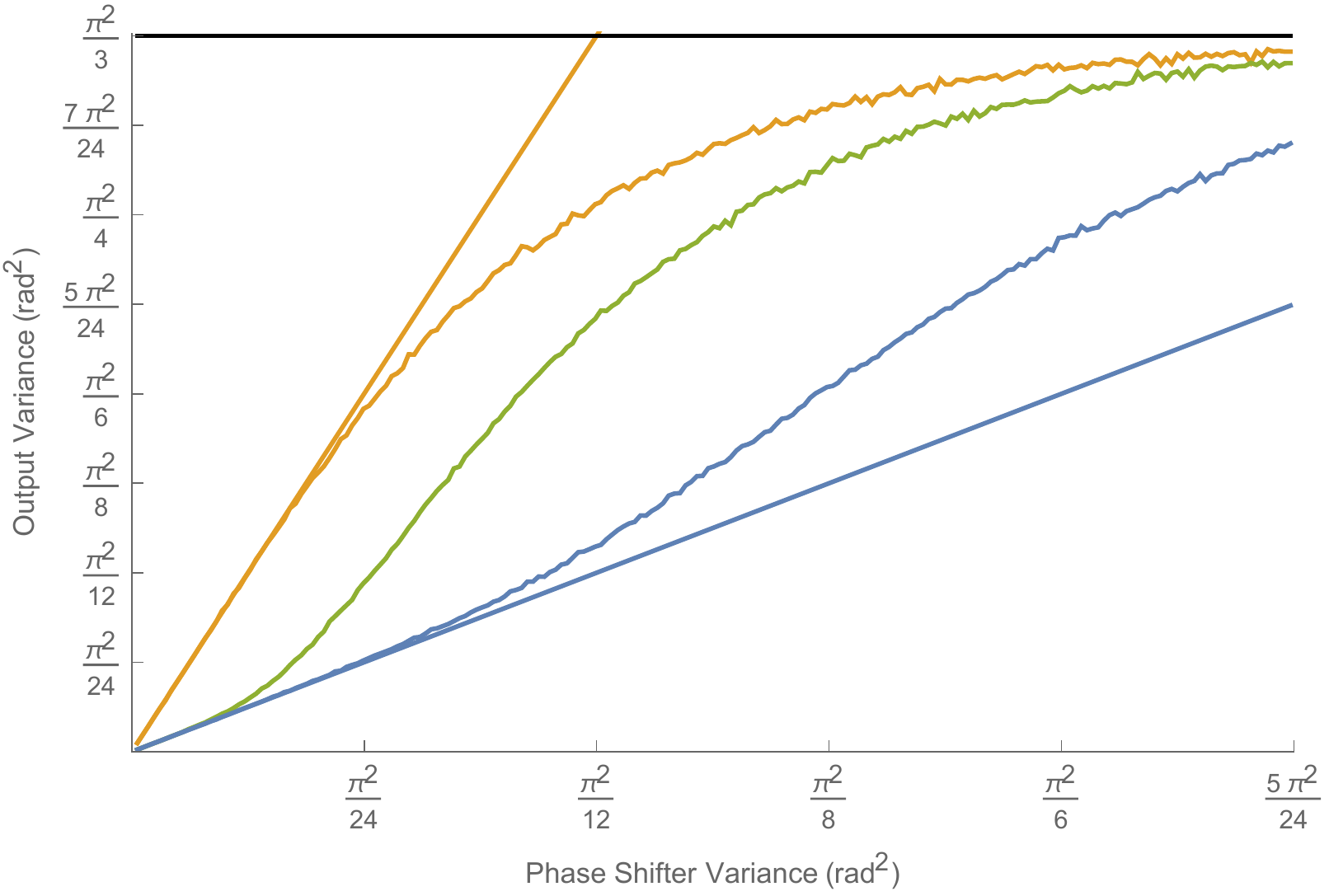}}
\caption{Variance in the total applied phase without Error Averaging (orange), when averaging across the entire system (green) and when averaging each component individually (blue), all plotted as a function of the variance in each individual phase shifter. Each system applied $4$ phase shifters in series, that is $M=4$, and the two error averaged systems averaged $4$ times, that is $N=4$. The predicted variance without any averaging (orange) and with averaging (blue) is also shown along with the maximum allowable variance. \label{fig:Variance(veriance)}}
\end{figure}

It can now be concluded that some hybrid method of Error Averaging would be most suitable in general. The entire system would need to be broken into $x$ smaller systems, which are independently averaged across. The specific value of $x$ would be such that the number of components in the system, $m$, is maximised while the total error within each subsystem is kept below the appropriate threshold.

\section{Four Mode Implementation Comparison \label{Four Mode Impementation Comparison}}

To gain a better understanding of how useful this method of Error Averaging actually is, a more complex system was also investigated along with both methods of redundant error correction. Specifically, a four mode system with four beam splitters, each implemented as above, that is each being its own MZ interferometer as shown in Figure \ref{fig:4 mode basis diagram}. Three different input states were chosen: a single photon input in the top mode and the vacuum state at all other modes$\left(\left|1,0,0,0\right\rangle \right)$, two photons, both in the top mode $\left(\left|2,0,0,0\right\rangle \right)$ and two photons spread across the top two modes $\left(\left|1,1,0,0\right\rangle \right)$. For simplicity the system was chosen to target the identity and error reduction was then applied using both implementations. That is, by averaging each beam splitter as done in section \ref{1 photon N arbitrary} and by concatenating the entire system in an interferometer as shown in Figure \ref{fig: averaging 4 mode diagram}. All results were found using \textit{Mathematica} to sample from the appropriate transformation matrix representing the system and then compute the second order approximation of the photon number expectation values and probabilities for $N=1, 2 \textrm{ and } 4$. All results can be found in Appendix \ref{Appendix full of results}

\begin{figure}[h]
\centerline{\includegraphics[width=\columnwidth]{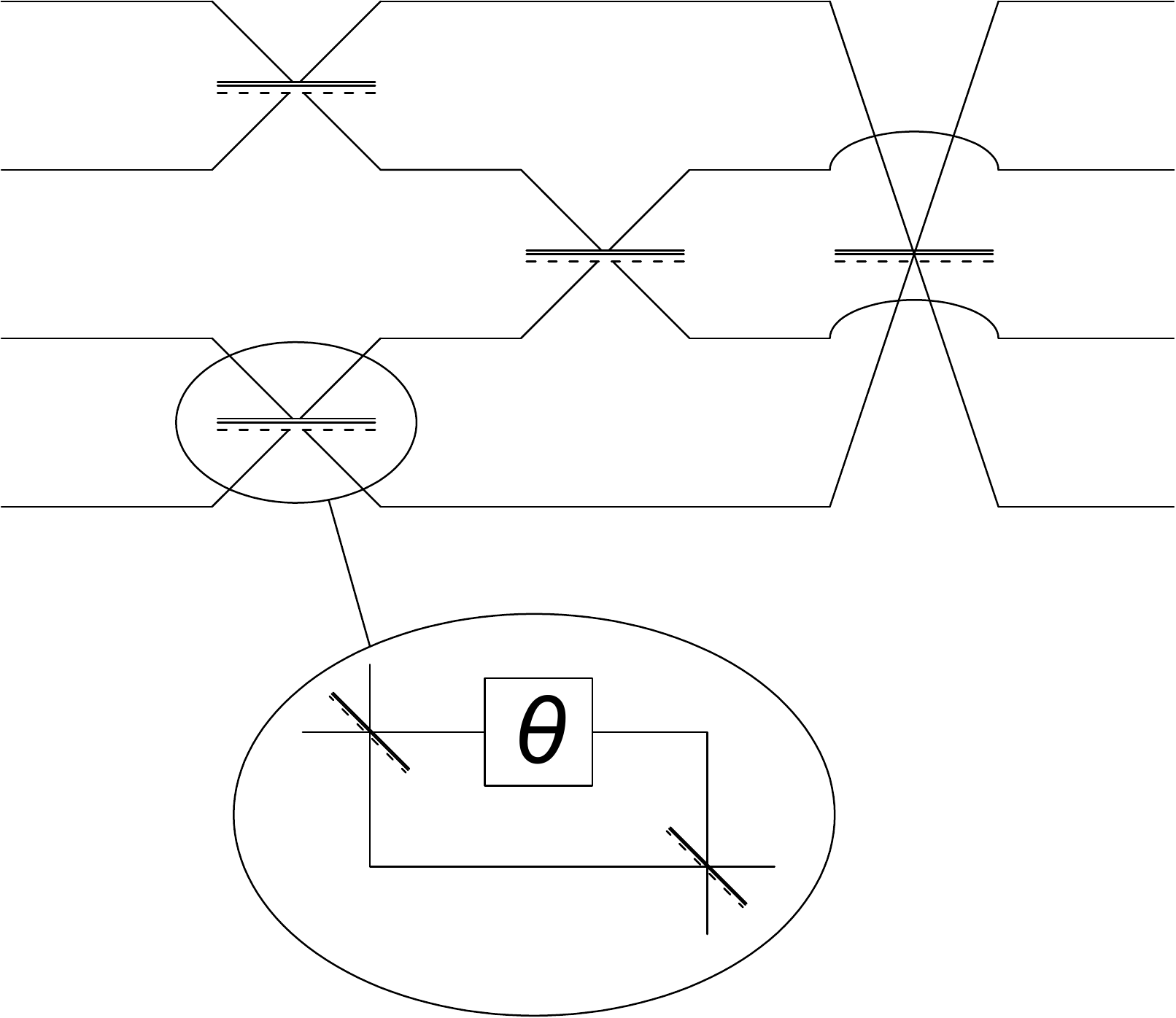}}
\caption{Diagram of the four mode linear optical network which forms the basis of the four mode set-ups. \label{fig:4 mode basis diagram}}
\end{figure}

\begin{figure}[h]
\centerline{\includegraphics[width=\columnwidth]{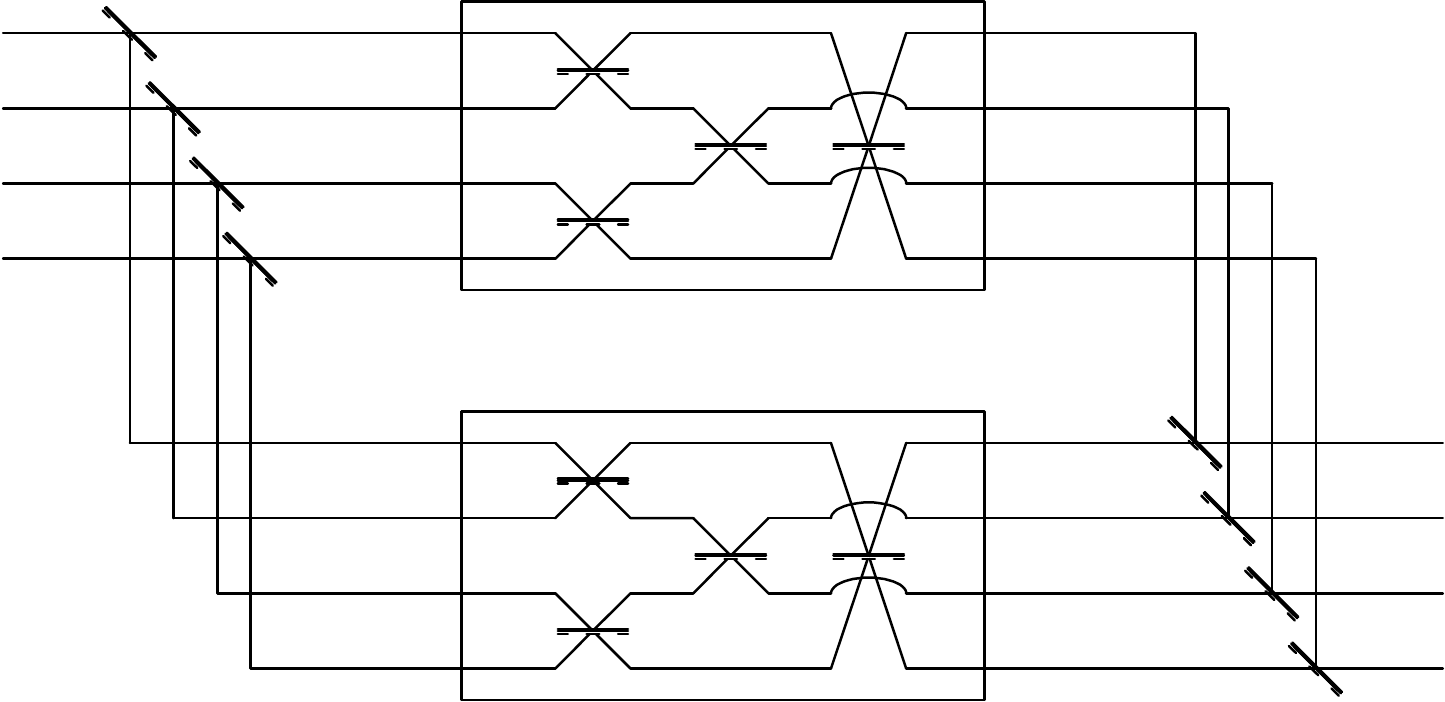}}
\caption{Diagram of the four mode linear optical network averaged across the system once. A two mode system averaged in this fashion could be considered as an Error Averaged dual rail single qubit unitary transformation. \label{fig: averaging 4 mode diagram}}
\end{figure}

These results show that the two correction methods produced equivalent results under the approximations used. The $1/N$ scaling in error after post selection, as seen above, was also observed, suggesting this pattern holds for higher numbers of modes and arbitrary system as expected from Theorem \ref{Theorem 1}.  The analysis that follows is based on the trends observed from these simulations.

\subsubsection{Generalising Example Results}

Starting with no error reduction, and given the correct output state $\left|\psi\right\rangle $ the following sequence can be defined. First defining the probability of obtaining the correct result when $N=1$ takes the form:
\begin{equation}
P_{1}(correct)=1-\frac{a_{1}}{b_{1}}v.
\end{equation}
The probability of an incorrect state is therefore
\begin{equation}
P_{1}(wrong)=\frac{a_{1}}{b_{1}}v.
\end{equation}
At this stage there are no error ports and so $P_{i}(wrong)+P_{i}(correct)$ represents the probability of success as previously defined. Note that the subscript indexes the corresponding number of averaging rounds. Explicitly $i+1$ corresponds to a system with twice as much averaging as $i$.  So, using the same observed form for the probability, averaging once changes these values to:
\begin{eqnarray}
P_{2}(correct) & = & 1-\frac{2a_{1}+1}{2b_{1}}v\nonumber \\
& \equiv & 1-\frac{a_{2}}{b_{2}}v
\end{eqnarray}
\begin{eqnarray}
P_{2}(wrong) & = & \frac{a_{1}}{2b_{1}}v\nonumber \\
& \equiv & \frac{a_{2}}{b_{2}}v
\end{eqnarray}
which can be further iterated.  In general
\begin{eqnarray}
P_{n}(correct) & = & 1-\frac{2a_{n-1}+1}{2b_{n-1}}v\nonumber \\
& = & 1-\frac{\left(2^{n-1}a_{1}+\left(2^{n-1}-1\right)\right)}{2^{n-1}b_{1}}v\\
P_{n}(wrong) & = & \frac{a_{n-1}}{2b_{n-1}}v\nonumber \\
& = & \frac{a_{1}}{2^{n-1}b_{1}}v
\end{eqnarray}
The probability of obtaining the correct state with post selection will then be
\begin{eqnarray}
&  & P_{n}\left(correct\left|\textrm{post selection}\right.\right)\nonumber \\
& = & \left(1-\frac{a_{1}}{2^{n-1}b_{1}}v\right)\xrightarrow[n\rightarrow\infty]{}1\label{eq:PcorrectGeneral}
\end{eqnarray}
The probability of success is 
\begin{equation}
P_{n}\left(success\right) =  1-\frac{\left(2^{n-1}-1\right)\left(a_{1}+1\right)}{2^{n-1}b_{1}}v\nonumber \\
\end{equation}
Hence we get an asympotitic expression for the probability of success to be 
\begin{equation}
\lim_{n\rightarrow\infty}P_{n}\left(success\right)=1-\left(\frac{a_{1}+1}{b_{1}}\right)v\label{eq:PsuccessGeneral}
\end{equation}
The result can be understood to be the first order approximation to Equation \ref{eq:approx commuting, general case}. The $\frac{a_1+1}{b_1}$ coefficient does not quite match what might be expected from Equation \ref{eq:approx commuting, general case} however one might only expect qualitative agreement as there is not any clear isomorphic map between the parameters in the system and the error coefficients of the Lie algebra generators.

This result hints at the self correcting nature of Error Averaging. By considering the inner corrected system with error laden beam splitters as the initial step in the sequence then each further step will be averaging across both the fixed beam splitters and the original error laden system. This could allow some of the beam splitters to be corrected  making the base assumptions on the quality of the fixed beam splitters less restrictive. This is highlighted in Figure \ref{fig:gen system} where, depending on which components are considered to be the system, it is averaged either $8$, $4$ or $2$ times. 

\section{Discussion\label{Discussion}}

The analysis presented in~\cref{implementation,averaging at end vs step,Four Mode Impementation Comparison} concentrated on implementing a single-mode phase shift either on it's own or as part of a Mach-Zender interferometer implementing a beam-splitter type transformation.  These could then be further used to build up higher-dimensional unitary transformations using any particular choice of decomposition, for which a specific decomposition was analysed in Section~\ref{Four Mode Impementation Comparison}.

On the other hand, one may want to redundantly encode an entire unitary rather than just the phases defining the internal parameters.  In this case we can use equation~\ref{eq:approx commuting, general case} and for simplicity consider the specific case with $T_l^2= I$
\begin{equation}
	M = U \prod_l e^{-\frac{1}{2}\sigma_l^2} = U e^{-\frac{1}{2} n^2 \sigma^2},
\end{equation}
where the $n^2$ term appears as the product over all $n^2$ generators.
This would result in an effective operator transformation for a $k$ photon state as
\begin{equation}
	\frac{{a^\dagger}_i^{k}}{\sqrt{k!}} \rightarrow 
	e^{-k n^2 \sigma^2/2} \frac{1}{\sqrt{k!}} \left(\sum_j U_{ij} {a^\dagger}_i \right)^k
\end{equation}
The coefficient here represents a reduction in the amplitude should this transformation be applied to a state, and hence represents the probability of success. To achieve a $O(1)$ probability of success, then the operator noise must obey $\sigma = O(k^{-1/2})$ and $\sigma = O(n^{-1})$ as $k,n \rightarrow \infty$.  These results are dependant on the assumptions and the desired performance, in terms of probability of success, will depend on the specific application.  However, it must be kept in mind that no error correction has been performed yet it is still possible to achieve a constant success probability with a reasonable scaling of the noise with respect to the network size. 

Within the constructions presented here some optical elements utilised have been assumed ideal.  In particular, the encoding beam-splitters were assumed to have exactly 50:50 reflectivities.  A more general consideration is that of the fault-tolerance of this encoding.  That is, can the ideal elements be error corrected whilst maintaining the error correcting power of the scheme.   In this paper we have focused on merely the error correcting power of different arrangements of phase shifts and how it varies across two choices of decomposition.  But there will be many and varied choices about how to implement fault tolerant constructions with some better than others, in much the same way as is applicable for discrete system in quantum computing implementations.  Never-the-less, the fact that, under an approximation of small errors the encoding tends to the ideal operation in the limit of large encoding sizes, it would be reasonable to expect that fault-tolerant constructions exist.  

\section{Comparison with conventional quantum error correction \label{Comparison with conventional error correction}}

It is insightful to qualitatively discuss the parallels between conventional qubit quantum error detection and correction techniques, and our error averaging technique. The simplest code to see this parallel is by considering the 3-qubit code, which is able to detect and correct at most a single physical bit-flip error on a 3-fold redundantly encoded logical qubit. In the 3-qubit code the logical qubit is encoded via GHZ-type entanglement across the three physical qubits using two maximally entangling CNOT gates. 
Specifically, encoding implements the redundant mapping $\alpha\ket{0} + \beta\ket{1} \to \alpha\ket{000} + \beta\ket{111}$ in the logical basis. In our scheme, on the other hand, the redundant encoding takes the form of W-like entanglement, implemented via an optical fanout operation, where a single excitation in a single mode is mapped to a superposition of a single excitation across multiple modes. Specifically, the encoding is of the form $\hat{a}_1^\dag \to \frac{1}{\sqrt{N}}(\hat{b}_1^\dag + \dots + \hat{b}_N^\dag)$. This is qualitatively very distinct from the previous GHZ-type encoding, since GHZ states are maximally-entangled states, whereas W-states are not. Unlike GHZ states, which collapse onto a perfectly mixed $N-1$ qubit state upon loss of just a single qubit, the loss of a single mode from a W-state preserves most entanglement for large $N$. This leads us to speculate that this property of W-states enables much of the structure of encoded states to be preserved upon localised errors. Indeed, for $N\gg 1$ we anticipate that the failure of a relatively small subset of the redundant operations will have little impact on the integrity of the entire encoded state, owing to this unique property of the structure of loss in W-states.

Like conventional quantum error correction, we observed error threshold behaviour in our analysis. That is, we are only able to improve the fidelity of a state if its initial fidelity is above an error correction threshold. Below this threshold the error correction technique fails to improve the state. Indeed, non-zero thresholds must necessarily apply so as not to violate the quantum no-cloning theorem.

We observed that a simple form of circuit concatenation, whereby the protocol is recursively embedded within itself to construct larger nested codes, enables higher degrees of error correction, asymptoting to some maximum. This is congruent with conventional codes, where code concatenation asymptotically improves error correction at the expense of increased physical resources to mediate the more complex encoding.

In our scheme the post-selection upon detecting no photons in the designated failure modes is equivalent to syndrome measurement in traditional qubit codes. Successful post-selection effectively projects the encoded state back into the codespace, whereas failure heralds an unsuccessful syndrome extraction, thereby mapping the unitary error to a located loss error. The probability of detecting no photons in the failure modes can be associated with the error detection probability in traditional codes, and the respective conditional probability of measuring the correct output state with the error correction probability.

An interesting open question is whether the structure of the redundant encoding we utilise in our protocol may be translated to other physical architectures or conventional qubit settings, or rather whether it is very specific to photonic linear optics.

\section{Conclusion\label{Conclusion}}

We have shown how, given multiple noisy copies of a linear optical unitary network, Error Averaging can be use to implement a transformation that tends towards the average with reduced variance at the cost of the probability of success.  After post-selection, Error Averaging forms a rudimentary error correction protocol by filtering the noise from the redundant copies of the unitary network. For this to form a true error correction protocol it will be necessary to introduce some sort of loss correction. The losses which will need to be corrected are unique however in that they are heralded and located, potentially simplifying the problem enormously. The variance in the transformations have been shown to scale as $\frac{1}{N}$ where $N$ represents the number of redundant copies of the network.  We have provided the mathematical basis necessary to determine the effect of Error Averaging on an arbitrary linear unitary and with fully characterised solutions for arbitrary single parameter noise and multiple parameter small Gaussian noise. We have also analytically determined the photon number expectation values for two mode systems with both one and two photon inputs, numerically simulated the output expectation values in four mode systems for both one and two photon inputs and numerically simulated the variance for different arrangements of phase shifters.

Two methods of Error Averaging for phase shifts have been presented which appear to have similar effects under certain conditions. In particular averaging after sequentially applying phases has the same behaviour as averaging each phase provided the errors are small. This behaviour is conjectured to be explained by considering the errors as approximately commuting.

\begin{acknowledgments}
	We thank Michael Bremner for motivating discussions. This research was funded by the Australian Research Council Centre of Excellence for Quantum Computation and Communication Technology (Project No.CE110001027). P.P.R is funded by an ARC Future Fellowship (project FT160100397).
\end{acknowledgments}

\bibliography{references}

\appendix
\begin{widetext}
\section{Four Mode Numerical Results \label{Appendix full of results}}
\FloatBarrier
This appendix contains all simulation results for the four mode system discussed in Section \ref{Four Mode Impementation Comparison}. All results are based on a second order Taylor expansion with $\nu\ll1$.

Tables \ref{tab:1 photon output prob bs} and \ref{tab:1 photon output prob as}
show the output probabilities and correct result probability with
post selection for the single photon input state $\left|1,0,0,0\right\rangle $.
These show that, at least for a single photon the two correction methods
are equivalent. We also see the halving of errors as seen in sections
3.2 and 3.4 suggesting this pattern may hold for a single photon with
an arbitrary number of modes.

\begin{table}
\resizebox{\textwidth}{!}{

\begin{centering}
	\begin{tabular}{|c|c|>{\centering}p{4cm}|>{\centering}p{4cm}|}
		\hline 
		Output State & No Error Reduction & Averaging Beam splitters Once $\left(N=2\right)$ & Averaging Beam splitters Twice $\left(N=4\right)$\tabularnewline
		\hline 
		\hline 
		$\left|1,0,0,0\right\rangle $ & $1-\frac{v}{2}$ & $1-\frac{3v}{4}$ & $1-\frac{7v}{8}$\tabularnewline
		\hline 
		$\left|0,1,0,0\right\rangle $ & $\frac{v}{4}$ & $\frac{v}{8}$ & $\frac{v}{16}$\tabularnewline
		\hline 
		$\left|0,0,1,0\right\rangle $ & $0$ & $0$ & $0$\tabularnewline
		\hline 
		$\left|0,0,0,1\right\rangle $ & $\frac{v}{4}$ & $\frac{v}{8}$ & $\frac{v}{16}$\tabularnewline
		\hline 
		$\left|1,0,0,0\right\rangle $ with post selection & $1-\frac{v}{2}$ & $1-\frac{v}{4}$ & $1-\frac{v}{8}$\tabularnewline
		\hline 
	\end{tabular}
	\par\end{centering}

}

\caption[Output probabilities for various levels of correcting the individual
beam splitters in a 4 mode set-up given an input of the state $\left|1,0,0,0\right\rangle $
.]{Output probabilities for various levels of correcting the individual
beam splitters in a 4 mode set-up given an input of the state $\left|1,0,0,0\right\rangle $
where $v$ is the variance of the phase error . \label{tab:1 photon output prob bs}}
\end{table}
\begin{table}
\resizebox{\textwidth}{!}{

\begin{centering}
	\begin{tabular}{|c|c|>{\centering}p{4cm}|>{\centering}p{4cm}|}
		\hline 
		Output State & No Error Reduction & Averaging Across the System Once $\left(N=2\right)$ & Averaging Across the System Twice $\left(N=4\right)$\tabularnewline
		\hline 
		\hline 
		$\left|1,0,0,0\right\rangle $ & $1-\frac{v}{2}$ & $1-\frac{3v}{4}$ & $1-\frac{7v}{8}$\tabularnewline
		\hline 
		$\left|0,1,0,0\right\rangle $ & $\frac{v}{4}$ & $\frac{v}{8}$ & $\frac{v}{16}$\tabularnewline
		\hline 
		$\left|0,0,1,0\right\rangle $ & $0$ & $0$ & $0$\tabularnewline
		\hline 
		$\left|0,0,0,1\right\rangle $ & $\frac{v}{4}$ & $\frac{v}{8}$ & $\frac{v}{16}$\tabularnewline
		\hline 
		$\left|1,0,0,0\right\rangle $ with post selection & $1-\frac{v}{2}$ & $1-\frac{v}{4}$ & $1-\frac{v}{8}$\tabularnewline
		\hline 
	\end{tabular}
	\par\end{centering}

}

\caption[Output probabilities for various levels of correcting the across the
system in a 4 mode set-up given an input of the state $\left|1,0,0,0\right\rangle $.]{Output probabilities for various levels of correcting the across
the system in a 4 mode set-up given an input of the state $\left|1,0,0,0\right\rangle $
where $v$ is the variance of the phase error. \label{tab:1 photon output prob as}}
\end{table}

Tables \ref{tab:2 photon output prob bs} and \ref{tab:2 photon output prob as}
show the output probabilities and correct result probability with
post selection for the single photon input state $\left|2,0,0,0\right\rangle $.
What we see is, unsurprisingly, much the same as in the single photon
case with a heightened susceptibility to the error. This includes
the halving pattern however this is expected as adding two photons
in the same mode will not necessarily lead to new interference effects
being observed.

\begin{table}
\resizebox{\textwidth}{!}{

\begin{centering}
	\begin{tabular}{|c|>{\centering}p{4cm}|>{\centering}p{4cm}|}
		\hline 
		Output State & No Error Reduction & Averaging Beam splitters Once $\left(N=2\right)$\tabularnewline
		\hline 
		\hline 
		$\left|2,0,0,0\right\rangle $ & $1-v$ & $1-\frac{3v}{2}$\tabularnewline
		\hline 
		$\left|0,2,0,0\right\rangle $ & $0$ & $0$\tabularnewline
		\hline 
		$\left|0,0,2,0\right\rangle $ & $0$ & $0$\tabularnewline
		\hline 
		$\left|0,0,0,2\right\rangle $ & $0$ & $0$\tabularnewline
		\hline 
		$\left|1,1,0,0\right\rangle $ & $\frac{v}{2}$ & $\frac{v}{4}$\tabularnewline
		\hline 
		$\left|1,0,1,0\right\rangle $ & $0$ & $0$\tabularnewline
		\hline 
		$\left|1,0,0,1\right\rangle $ & $\frac{v}{2}$ & $\frac{v}{4}$\tabularnewline
		\hline 
		$\left|0,1,1,0\right\rangle $ & $0$ & $0$\tabularnewline
		\hline 
		$\left|0,1,0,1\right\rangle $ & $0$ & $0$\tabularnewline
		\hline 
		$\left|0,0,1,1\right\rangle $ & $0$ & $0$\tabularnewline
		\hline 
		$\left|2,0,0,0\right\rangle $ with post selection & $1-v$ & $1-\frac{v}{2}$\tabularnewline
		\hline 
	\end{tabular}
	\par\end{centering}

}

\caption[Output probabilities for various levels of correcting the individual
beam splitters in a 4 mode set-up given an input of the state $\left|2,0,0,0\right\rangle $.]{Output probabilities for various levels of correcting the individual
beam splitters in a 4 mode set-up given an input of the state $\left|2,0,0,0\right\rangle $
where $v$ is the variance of the phase error. \label{tab:2 photon output prob bs}}
\end{table}
\begin{table}
\resizebox{\textwidth}{!}{

\begin{centering}
	\begin{tabular}{|c|>{\centering}p{4cm}|>{\centering}p{4cm}|>{\centering}p{4cm}|}
		\hline 
		Output State & No Error Reduction & Averaging Beam splitters Once $\left(N=2\right)$ & Averaging Beam splitters Twice $\left(N=4\right)$\tabularnewline
		\hline 
		\hline 
		$\left|2,0,0,0\right\rangle $ & $1-v$ & $1-\frac{3v}{2}$ & $1-\frac{7v}{4}$\tabularnewline
		\hline 
		$\left|0,2,0,0\right\rangle $ & $0$ & $0$ & $0$\tabularnewline
		\hline 
		$\left|0,0,2,0\right\rangle $ & $0$ & $0$ & $0$\tabularnewline
		\hline 
		$\left|0,0,0,2\right\rangle $ & $0$ & $0$ & $0$\tabularnewline
		\hline 
		$\left|1,1,0,0\right\rangle $ & $\frac{v}{2}$ & $\frac{v}{4}$ & $\frac{v}{8}$\tabularnewline
		\hline 
		$\left|1,0,1,0\right\rangle $ & $0$ & $0$ & $0$\tabularnewline
		\hline 
		$\left|1,0,0,1\right\rangle $ & $\frac{v}{2}$ & $\frac{v}{4}$ & $\frac{v}{8}$\tabularnewline
		\hline 
		$\left|0,1,1,0\right\rangle $ & $0$ & $0$ & $0$\tabularnewline
		\hline 
		$\left|0,1,0,1\right\rangle $ & $0$ & $0$ & $0$\tabularnewline
		\hline 
		$\left|0,0,1,1\right\rangle $ & $0$ & $0$ & $0$\tabularnewline
		\hline 
		$\left|2,0,0,0\right\rangle $ with post selection & $1-v$ & $1-\frac{v}{2}$ & $1-\frac{v}{4}$\tabularnewline
		\hline 
	\end{tabular}
	\par\end{centering}

}

\caption[Output probabilities for various levels of correcting the across the
system in a 4 mode set-up given an input of the state $\left|2,0,0,0\right\rangle $.]{Output probabilities for various levels of correcting the across
the system in a 4 mode set-up given an input of the state $\left|2,0,0,0\right\rangle $
where $v$ is the variance of the phase error. \label{tab:2 photon output prob as}}
\end{table}

Tables \ref{tab:1,1 photon output prob as} and \ref{tab:1,1 photon output prob as}
show the output probabilities and correct result probability with
post selection for the single photon input state $\left|1,1,0,0\right\rangle $.
It can once more be seen that the two methods of error correction
appear to be equivalent. Now there is an underlying pattern clearly
forming which appears to hold for arbitrary one and two photon inputs.
This is important as it allows us to conclude about when it is most
useful to use each type of correction. It also allowed a prediction
of the error models for applications of Error Averaging, as discussed
below.

\begin{table}
\resizebox{\textwidth}{!}{

\begin{centering}
	\begin{tabular}{|c|>{\centering}p{4cm}|>{\centering}p{4cm}|>{\centering}p{4cm}|}
		\hline 
		Output State & No Error Reduction & Averaging Beam splitters Once $\left(N=2\right)$ & Averaging Beam splitters Twice $\left(N=4\right)$\tabularnewline
		\hline 
		\hline 
		$\left|2,0,0,0\right\rangle $ & $\frac{v}{2}$ & $\frac{v}{4}$ & $\frac{v}{8}$\tabularnewline
		\hline 
		$\left|0,2,0,0\right\rangle $ & $\frac{v}{2}$ & $\frac{v}{4}$ & $\frac{v}{8}$\tabularnewline
		\hline 
		$\left|0,0,2,0\right\rangle $ & $0$ & $0$ & $0$\tabularnewline
		\hline 
		$\left|0,0,0,2\right\rangle $ & $0$ & $0$ & $0$\tabularnewline
		\hline 
		$\left|1,1,0,0\right\rangle $ & $1-\frac{3v}{2}$ & $1-\frac{7v}{4}$ & $1-\frac{15v}{8}$\tabularnewline
		\hline 
		$\left|1,0,1,0\right\rangle $ & $\frac{v}{2}$ & $\frac{v}{8}$ & $\frac{v}{16}$\tabularnewline
		\hline 
		$\left|1,0,0,1\right\rangle $ & $0$ & $0$ & $0$\tabularnewline
		\hline 
		$\left|0,1,1,0\right\rangle $ & $0$ & $0$ & $0$\tabularnewline
		\hline 
		$\left|0,1,0,1\right\rangle $ & $\frac{v}{2}$ & $\frac{v}{8}$ & $\frac{v}{16}$\tabularnewline
		\hline 
		$\left|0,0,1,1\right\rangle $ & $0$ & $0$ & $0$\tabularnewline
		\hline 
		$\left|1,1,0,0\right\rangle $ with post selection & $1-\frac{3v}{2}$ & $1-\frac{3v}{4}$ & $1-\frac{3v}{8}$\tabularnewline
		\hline 
	\end{tabular}
	\par\end{centering}

}

\caption[Output probabilities for various levels of correcting the individual
beam splitters in a 4 mode set-up given an input of the state $\left|1,1,0,0\right\rangle $.]{Output probabilities for various levels of correcting the individual
beam splitters in a 4 mode set-up given an input of the state $\left|1,1,0,0\right\rangle $
where $v$ is the variance of the phase error. \label{tab:1,1 photon output prob bs}}
\end{table}
\begin{table}
\resizebox{\textwidth}{!}{

\begin{centering}
	\begin{tabular}{|c|>{\centering}p{4cm}|>{\centering}p{4cm}|>{\centering}p{4cm}|}
		\hline 
		Output State & No Error Reduction & Averaging Beam splitters Once $\left(N=2\right)$ & Averaging Beam splitters Twice $\left(N=4\right)$\tabularnewline
		\hline 
		\hline 
		$\left|2,0,0,0\right\rangle $ & $\frac{v}{2}$ & $\frac{v}{4}$ & $\frac{v}{8}$\tabularnewline
		\hline 
		$\left|0,2,0,0\right\rangle $ & $\frac{v}{2}$ & $\frac{v}{4}$ & $\frac{v}{8}$\tabularnewline
		\hline 
		$\left|0,0,2,0\right\rangle $ & $0$ & $0$ & $0$\tabularnewline
		\hline 
		$\left|0,0,0,2\right\rangle $ & $0$ & $0$ & $0$\tabularnewline
		\hline 
		$\left|1,1,0,0\right\rangle $ & $1-\frac{3v}{2}$ & $1-\frac{7v}{4}$ & $1-\frac{15v}{8}$\tabularnewline
		\hline 
		$\left|1,0,1,0\right\rangle $ & $\frac{v}{2}$ & $\frac{v}{8}$ & $\frac{v}{16}$\tabularnewline
		\hline 
		$\left|1,0,0,1\right\rangle $ & $0$ & $0$ & $0$\tabularnewline
		\hline 
		$\left|0,1,1,0\right\rangle $ & $0$ & $0$ & $0$\tabularnewline
		\hline 
		$\left|0,1,0,1\right\rangle $ & $\frac{v}{2}$ & $\frac{v}{8}$ & $\frac{v}{16}$\tabularnewline
		\hline 
		$\left|0,0,1,1\right\rangle $ & $0$ & $0$ & $0$\tabularnewline
		\hline 
		$\left|1,1,0,0\right\rangle $ with post selection & $1-\frac{3v}{2}$ & $1-\frac{3v}{4}$ & $1-\frac{3v}{8}$\tabularnewline
		\hline 
	\end{tabular}
	\par\end{centering}

}

\caption[Output probabilities for various levels of correcting the across the
system in a 4 mode set-up given an input of the state $\left|1,1,0,0\right\rangle $.]{Output probabilities for various levels of correcting the across
the system in a 4 mode set-up given an input of the state $\left|1,1,0,0\right\rangle $
where $v$ is the variance of the phase error. \label{tab:1,1 photon output prob as}}
\end{table}
\end{widetext}
\end{document}